\theoremstyle{definition}
\newtheorem{definition}{Definition}
\theoremstyle{plain}
\newtheorem{theorem}{Theorem}
\newtheorem{lemma}{Lemma}
\theoremstyle{remark}
\newtheorem{example}{Example}
\DeclareMathOperator*{\argmin}{argmin} 
\newcommand{\ConceptTerm}[1]{\ensuremath{\text{\texttt{#1}}}}
\newcommand{\ConPAtt}{ConPAtt}
\begin{document}

%%
%% Rights management information.
%% CC-BY is default license.
\copyrightyear{2025}
\copyrightclause{Copyright for this paper by its authors.
  Use permitted under Creative Commons License Attribution 4.0
  International (CC BY 4.0).}

%%
%% This command is for the conference information
\conference{OVERLAY 2025, 7th International Workshop on Artificial Intelligence and Formal Verification, Logic, Automata, and Synthesis, October 26th, 2025, Bologna, Italia}

%%
%% The "title" command
\title{%
Attack logics, not outputs: Towards efficient robustification of deep neural networks by falsifying concept-based properties
%Adversarial attack on Neural Networks exploiting Explainable techniques
}

%%
%% The "author" command and its associated commands are used to define
%% the authors and their affiliations.
\author[1]{Raik Dankworth}[%
orcid=0009-0001-5617-2069,
email=r.dankworth@uni-luebeck.de,
url=https://isp.uni-luebeck.de/staff/r-dankworth,
]
\author[1]{Gesina Schwalbe}[%
orcid=0000-0003-2690-2478,
email=gesina.schwalbe@uni-luebeck.de,
url=https://isp.uni-luebeck.de/staff/g-schwalbe,
]
\address[1]{University of Lübeck, Germany}

%% Footnotes
%\cortext[1]{Corresponding author.}

%%
%% The abstract is a short summary of the work to be presented in the
%% article.
\begin{abstract}
%% DNNs are cool
%% DNNs are non-robust / attackable
Deep neural networks (NNs) for computer vision are vulnerable to adversarial attacks, i.e., miniscule  malicious changes to inputs may induce unintuitive outputs.
%% Goal: adversarial training
One key approach to verify and mitigate such robustness issues is to falsify expected output behavior. This allows, e.g., to locally proof security, or to (re)train NNs on obtained adversarial input examples.
%% Issue: Efficiency
%However, formal methods to obtain attacks are costly due to the size of the search space. % TODO: is this really directly the "size" that is breaking things? Maybe reformulate
%% Issue: Simplistic constraints
Due to the black-box nature of NNs, current attacks only falsify a class of the \emph{final output},
%% example
such as flipping from $\ConceptTerm{stop\_sign}$ to $\neg\ConceptTerm{stop\_sign}$.
%% Idea: concept properties
In this short position paper we generalize this to search for generally \emph{illogical} behavior,
as considered in NN verification: falsify constraints (\emph{concept-based properties}) involving further human-interpretable concepts,
%% example
like $\ConceptTerm{red}\wedge\ConceptTerm{octogonal}\rightarrow\ConceptTerm{stop\_sign}$.
For this, an easy implementation of concept-based properties on already trained NNs is proposed using techniques from explainable artificial intelligence.
Further, we sketch the theoretical proof that attacks on concept-based properties are expected to have a reduced search space compared to simple class falsification, whilst arguably be more aligned with intuitive robustness targets.
As an outlook to this work in progress we hypothesize that this approach has potential to efficiently and simultaneously improve logical compliance and robustness.
\end{abstract}

%%
%% Keywords. The author(s) should pick words that accurately describe
%% the work being presented. Separate the keywords with commas.
\begin{keywords}
  Trustworthy AI \sep
  Neural Network Verification \sep
  Adversarial Attack \sep
  Explainable Neural Network \sep
  Concept-based XAI \sep
  Computer Vision 
\end{keywords}

%%
%% This command processes the author and affiliation and title
%% information and builds the first part of the formatted document.
\maketitle

\section{Introduction}

%% DNNs should behave intuitively, in particular be robust
Neural Networks (NNs) excel in processing subsymbolic inputs like images, and are increasingly being considered for use in safety-critical domains~\cite{rech_artificial_2024}.
This makes it crucial to ensure their robust and intuitive generalization, at least around known training cases.
% making it crucial to understand not only when they fail, but also why.
%Two fields address this from different angles: adversarial attacks, which aims to reveal model vulnerabilities, and eXplainable AI (XAI), which aims to uncover human-interpretable structures within the model's decision process.
%Although these research trends share the same goal of understanding and improving NN's behaviour, their methods are often considered separately.
%We argue that integrating insights from both fields can enable a more structured and semantically grounded analysis of neural networks.
One tool to evaluate vulnerability to malicious attacks are 
Adversarial Attacks (AAs):
These craft inputs that induce incorrect or unexpected predictions, using minimal modifications to a correctly handled input $\mathbf{x}$ with $\mathbf{y}=f(\mathbf{x})$
\cite{%
suryanto_dta_2022,li_physical-world_2023,hu_adversarial_2023,zheng_physical_2024,% realistic
zhu_learning_2024,wang_boosting_2024, ming_boosting_2024% transferrable
}.
%Much of the recent work in this area focuses on efficiently
%% TODO: and provably? What about formally provable robustness?
%generating realistic \cite{suryanto_dta_2022,li_physical-world_2023,hu_adversarial_2023,zheng_physical_2024} and transferable \cite{zhu_learning_2024,wang_boosting_2024, ming_boosting_2024} adversarial examples. %---i.e., model-agnostic and practical modifications.
However, existing attacks solely focus on altering the model's final output, i.e., falsify $\forall \mathbf{x}' \in \text{Nbhd}(\mathbf{x})\colon  f(\mathbf{x}') = y$ for some neighborhood $\text{Nbhd}(\mathbf{x})$ around $\mathbf{x}$ like an $\epsilon$-ball. 
This disregards whether the prediction still conforms to high-level, interpretable properties.
Common examples of known properties are sufficient conditions, e.g., $\ConceptTerm{red}(\mathbf{x})\wedge \ConceptTerm{octogonal}(\mathbf{x})\implies \ConceptTerm{stop\_sign}(\mathbf{x})$ in traffic sign recognition from images $x$;
and necessary conditions, like $\neg\ConceptTerm{octogonal}(\mathbf{x})\implies\neg \ConceptTerm{stop\_sign}(\mathbf{x})$.
%Similarly, $\ConceptTerm{scales}(x)\wedge\ConceptTerm{fin}(x)\rightarrow \ConceptTerm{fish}(x)$.
More general, rules involving unary predicates not available from the NN outputs are here called \emph{concept-based properties}. Rich semantic rules are known to be well suited for runtime plausibility monitoring \cite{%
schwalbe2022enabling,%
giunchiglia2022roadr%
} and respective fixing of NN outputs %during runtime
\cite{%
giunchiglia2022roadr,%
ledaguenel2024improving,% ECAI2024 paper on post-hoc rule-based output fixing
%} and training \cite{%
badreddine2022logic%
}. % of NNs. % TODO: shorten?
In particular, they don't constrain the local \emph{output} to be correct, but the underlying \emph{general logical reasoning} locally around the sample.
% This fosters intuitive and human-aligned generalization of the NN's behavior.

One reason why falsification of such informative constraints are not considered for attack generation is that they require outputs for all involved predicates---not only the available final output, like $\ConceptTerm{stop\_sign}$.
These however, might need a considerable amount of training data or hyperparameter tuning if added right away during the training; or, even worse, not all properties and thus not all required concepts might be known at training time due to specification gaps or later domain transfer. % For example, new properties might be discovered during manual validation steps, or only become relevant after domain transfer or specializations.

The trick we now use here is that NNs automatically learn to encode task-related concepts in their intermediate outputs. For example, when trained for \ConceptTerm{stop\_sign} recognition, the NN may implicitly learn to identify \ConceptTerm{octagons}, \ConceptTerm{red} color, and the \ConceptTerm{stop\_label}.
Post-hoc supervised concept-based explainability methods~\cite{bau_network_2017,fong_net2vec_2018,crabbe_concept_2022,oikarinen_clip-dissect_2022} can recover this information in a very sample-efficient manner with minimal additions to the NN structure. 
%Concretely, they make the concept information learned by a trained NN available via additionally attaching outputs to the layer best encoding the concept.

Altogether, we propose and theoretically analyze a general AA goal ---the \emph{Concept-based Property Attack (\ConPAtt)}---that explicitly targets falsification of symbolic \textit{concept-based properties} over non-symbolic inputs.
As we will show, our formulation offers a more general way to define both targeted and untargeted attacks.
Furthermore, as opposed to classical attacks that purely change the output, our attack on $\neg\ConceptTerm{octogonal}\implies\neg\ConceptTerm{stop\_sign}$ can produce an image still classified as \ConceptTerm{stop\_sign}, in which the \ConceptTerm{octogonal} concept is no longer recognized.
This newly allows to uncover failure cases with semantically inconsistent yet possibly high-confidence predictions that are invisible to standard attacks.
As we show, standard white-box attack techniques can still easily and efficiently be applied, producing meaningful attacks and a more constrained adversarial space as compared to traditional AAs.
%
% This could also go in the other direction.
% The approach could manipulate images that still predicts ``scales`` without detecting a ``fish``.
% In this case, conditional failure will be detected that standard attacks might not be detect. % this is too vague. Either explain more or leave out.
%
%By defining and violating concept-based properties, we provide a way to perform symbolic verification on visual models with a difficult to interpret input domain.
%This offers a new tool for neural network verification (NNV), grounded in interpretable structures derived from XAI.
%
%In addition to presenting the method, we provide a statistical analysis showing that our approach induces a semantically meaningful and more constrained adversarial space compared to traditional attacks, reducing the expected size of the search space.
%Furthermore, we assume that these adversarial examples on concept properties are effective for adversarial retraining, as they not only promote adherence to these rules, but also presumably indirectly support robustness. % TODO: can be put to outlook

\paragraph{Contributions.}
%We focus our study to digital white-box attacks and does not consider any defence mechanisms except retraining in this work.
Our main contributions are:
\begin{itemize}%[nosep]
    \item We introduce \ConPAtt, a general XAI-supported adversarial attack goal that targets concept-based properties rather than just NN outputs.
    \item We proof that \ConPAtt{} generalizes both classical targeted and untargeted AA formulations, but same-sized or smaller adversarial space.
    \item We hypothesize several advantages of \ConPAtt{}s for certifying robustness and for adversarial retraining, posing the chance to efficiently improve both semantic consistency and robustness.
\end{itemize}

%\paragraph{Notation.} % TODO: move as much of the common notation here as possible, e.g., symbols for NN, input, output, concepts, the probing image, the confidence values, ...

\section{Related Work}

\paragraph{Adversarial Attacks} % TODO: Adversarial Attacks in Neural Network Verification
AAs generally search within the vicinity of an input sample $x$ for minimally perturbed variants $\tilde{x}=x+\epsilon$ that have a malicious effect on the NN's output \cite{khamaiseh2022adversarial}.
The perturbations can be arbitrary (digital AAs, considered) \cite{szegedy_intriguing_2014,wu_skip_2019,su_one_2019,wang_enhancing_2021,zhu_learning_2024,wang_boosting_2024}, or
further constrained to realistic changes (physical AAs)
% such as lighting effects
\cite{li_physical-world_2023,hu_adversarial_2023,%
%} or patches \cite{
eykholt_robust_2018,liu_perceptual-sensitive_2019,suryanto_dta_2022,zheng_physical_2024}. However, the minimality makes the changes often invisible or difficult to see for humans.
At the methodological level, black-box approaches only require access to NN inputs and outputs \cite{liu_perceptual-sensitive_2019,suryanto_dta_2022,hu_adversarial_2023,huang_safari_2023,wang_rfla_2023,zheng_physical_2024}. White-box attacks as considered here instead exploit NN model internals, such as the gradient, for a more efficient search \cite{szegedy_intriguing_2014,wu_skip_2019,wang_enhancing_2021,li_physical-world_2023,zhu_learning_2024,wang_boosting_2024}.
%Each approach can usually be used for both targeted and untargeted attacks.
%In the targeted case, the output of the perturbed sample is predefined, while in the untargeted case only the misclassification is of interest.
Generally, AAs can be seen as a subfield of NN verification that falsifies a continuity property \cite{liu2021algorithms,sun2018concolic}.
Thus, usual search, reachability analysis, and---most prominently---optimization techniques are applicable to find or disprove adversarial examples \cite{khamaiseh2022adversarial}.
Regarding types of specifications beyond continuity properties, approaches such as Scenic \cite{fremont_scenic_2019} and VerifAI \cite{dreossi_verifai_2019} demonstrate how formal specifications can be used to generate and analyze simulation-based scenarios with symbolic inputs. In contrast, our approach targets AAs on non-symbolic image inputs, which prevents the direct use of such tools but similarly requires formal specifications.

\paragraph{Concept-based Explainability}
Concept-based explainability generally aims to associate human-interpretable concepts with representations in NN latent space \cite{lee2025conceptbased,poeta2023conceptbased,schwalbe2022concept}.
This includes understanding which concepts are relevant to the decision and to what extent \cite{wan_nbdt_2020,oikarinen_clip-dissect_2022}, and how these can be accurately recognized in NNs \cite{fong_net2vec_2018,maglogiannis_verification_2021}.
If concept definitions in form of labeled samples are available at training time, ante-hoc approaches~\cite{koh_concept_2020,wan_nbdt_2020,yuksekgonul_post-hoc_2022,oikarinen_label-free_2022,yang_language_2023} can train individual neurons to activate for the concept.
We here instead consider post-hoc approaches: These train a simple model to predict the concept of interest from an NN layer's activation \cite{kim2018interpretability}. Other than single-neuron-associations \cite{bau2017network,olah2017feature}, or complex models \cite{crabbe2022concept,zhang2021invertible}, linear models considered here \cite{fong2018net2vec,kim2018interpretability,graziani2018regression,mikriukov2025local} pose a good tradeoff between capturing the entanglement of representations \cite{fong2018net2vec,dreyer2024pure}, interpretability \cite{kim2018interpretability}, and favorably simple representation of the concept as halfspace in the NN's latent space.

\paragraph{XAI and Verification}
Prior work has shown that concept-based explanation methods are vulnerable to adversarial attacks.
Perturbations can mislead attribution \cite{slack_fooling_2020} and concept-based tools \cite{mikriukov_unveiling_2024,brown_making_2023}, and adversarial examples significantly alter the internal concept composition of NNs \cite{mikriukov_unveiling_2024}, confirming the general fragility of interpretability methods \cite{ghorbani_interpretation_2019}.
However, these studies target concepts in isolation, without considering their joint relation to model predictions.

Beyond highlighting vulnerabilities, concept outputs have also been used for verification.
Mangal et al. \cite{mangal_concept-based_2024} employed vision–language models to check concept-based properties.
%However, the multimodal latent space in VLMs is relatively imprecise and verification thus relies on comparing linguistic representations of concepts, which can be error-prone.
While expressive, this approach relies on semantic similarity in multimodal embeddings (e.g., CLIP \cite{radford_learning_2021}), which can introduce linguistic ambiguity as well as imprecision for similar terms with small visual differences, e.g., \ConceptTerm{circle} versus \ConceptTerm{octagon}.
Moreover, it is restricted to the latent space of a specific layer, although simple visual concepts may predominantly appear earlier and diminish in later layers.
Cheng et al.~\cite{cheng_towards_2020} proposed specifications close to the output layer, but without decomposing them into underlying concepts and by employing an additional NN.
Semantic losses \cite{xu2018semantic,badreddine2022logic} like logic tensor networks \cite{badreddine2022logic} suggest to directly train concept-based rules into the network. These techniques, however, are only used for updating the NN, not for verification as done in \cite{schwalbe2022enabling}, and not for AAs. Furthermore, they rely on concepts being direct outputs of the NN. Even further decoupling the verification from the NN's learned representations and thus exacerbating training efforts, Xie et al.~\cite{xie2022neurosymbolic} even trained completely separate NNs for predicting the concepts. 
Our work also directly addresses the relationship between concepts and model outputs like, a perspective that has received little attention so far \cite{schwalbe2022enabling}. However, similar to the verification testing techniques from \cite{cheng_towards_2020,schwalbe2022enabling}, we suggest to keep training and verification efforts low by using faithful explainability techniques to access concept predictions, and we newly apply the setup to AAs.

\section{Background}

\paragraph{Adversarial Attacks}
Let $\mathbf{x} \in \mathcal{X}$ be a real image, $\mathbf{y} \in \mathcal{Y}$ be its true label, and $f\colon \mathcal{X} \to \mathcal{Y}$ be a NN.
An AA seeks an adversarial example $\mathbf{x}^{\text{adv}} \coloneqq \mathbf{x} + \epsilon \in \mathcal{X}$ so that its output is (sufficiently) different from the original, %$\|f(\mathbf{x}) - f(\mathbf{x}^{\text{adv}})\|$ with respect to some difference metric $\|\cdot\|$,
and the perturbation~$\epsilon$ is minimal to an objective function $o$ (usually the L1, L2, or L-infinity norm on the input for digital attacks).
Sufficient difference can be formulated in terms of a $\mathbf{y}$-specific partition of the output set $\mathcal{Y}$ into a benign output set
$\mathcal{Y}^+%=\{\mathbf{x}^{\text{adv}}\in \mathcal{X} \mid \|f(\mathbf{x}) - f(\mathbf{x}^{\text{adv}})\| \leq \epsilon\}
\subset \mathcal{Y}$ with $f(\mathbf{x})\in\mathcal{Y}^+$, and a malicious one $\mathcal{Y}^- \coloneqq \mathcal{Y} \setminus \mathcal{Y}^+$. %  with $f(\mathbf{x}^{\text{adv}}) \in \mathcal{Y}^-$.
The search for the minimum perturbation~$\epsilon$ then is the optimization problem
\begin{align}
    &{\textstyle\argmin_\epsilon} \> o(\epsilon)  \qquad
    \text{s.t. }  f(\mathbf{x} + \epsilon) \in \mathcal{Y}^-
    \;.
\end{align}
%The objective function~$o$ should prioritize adversarial examples with minimal changes. 
%
%
%For digital attacks, one of the L-norm, i.e. L1, L2 or L-infinity norm, is usually used, whereas in the physical case, that could be different.
%For example, it is also possible to count the number of modified values or areas \cite{eykholt_robust_2018,hu_adversarial_2023}.
%
Adversarial attack strategies for classification are categorized as \emph{targeted} or \emph{untargeted} according to their choice of $\mathcal{Y}^{-}$:
Let $p_l \colon \mathcal{Y} \to [0,1]$ denote the confidence assigned to class~$l$, and $\theta_l \in [0,1]$ the threshold required to accept class~$l$.
In untargeted attacks, the goal is to reduce the confidence of the true class below threshold, i.e., $\mathcal{Y}^- = \{ \mathbf{y} \in \mathcal{Y} \mid p_l(\mathbf{y}) < \theta_l \}$.
In contrast, targeted attacks aim to raise the confidence of an incorrect class~$l'$ above a threshold, i.e., $\mathcal{Y}^- = \{ \mathbf{y} \in \mathcal{Y} \mid p_{l'}(\mathbf{y}) \geq \theta_{l'} \}$.

\paragraph{Post-hoc Concept Extraction} % TODO: details
%TODO: defense strategies
%TODO: XAI methods
Let $C$ be a set of concepts (e.g., $C=\{\ConceptTerm{red}, \ConceptTerm{orthogonal}\}$), and assume a possibly small classification dataset $\mathcal{D}_c = \left((\mathbf{x}_k, \mathbf{y}_{c,k})\right)_k$ is available per concept $c\in C$. Further denote by $f_{i\to j}\colon \mathcal{X}_i\to\mathcal{X}_j$ the NN part that maps from the $i$th to the $j$th layer.
Through linear post-hoc concept extraction, additional concept outputs are added to the NN by attaching for each $c$ a linear classification model $f_{i\to c}\colon \mathcal{X}_i\to \mathcal{C}_c=[0,1]$ to the $i$th hidden layer as illustrated in \autoref{fig:post-hoc-network}. Keeping the NN's weights fixed, the weights of $f_{i\to c}$ are trained on pairs $\left((f_{\to i}(\mathbf{x}_k), \mathbf{y}_{c,k})\right)_k$,
such that $c$'s concept function $f_c = f_{i\to c}\circ f_{\to i}\colon \mathcal{X}\to \mathcal{C}_c$ correctly predicts presence of the concept in an input image.
Note that $f_{i\to c}$ being linear conveniently makes any subspace $\{v\in \mathcal{X}_i\mid f_{i\to c}(v) > \theta\}$ an affine linear half-space.
%An important fact is that each concept~$c$ uses the output of a particular hidden layer~$i$ as a basis to predict the concept using a linear separator~$f_{i \to c}$, i.e.  if $f_{i \to j}$ is the NN function from layer~$i$ to layer~$j$, the concept extraction function is defined by $f_{c} = f_{i \to c} \circ f_{0 \to i}$ at layer~$i$.
%The \autoref{fig:post-hoc-network} illustrates how the concept extraction might look like.
In the following, we denote by $f_C = (f_c)_{c\in C}\colon \mathcal{X} \to \mathcal{C} = (\mathcal{C}_c)_{c\in C} $ the complete prediction of all concepts, and by $\mathcal{Z} = \mathcal{Y} \times \mathcal{C}$ the complete output set after attaching the concept outputs.

%and the NN function $f'_N(\mathbf{x}) = (f_N(\mathbf{x}), f_C(\mathbf{x}))$.

% TODO: graphical illustration
\begin{figure}
    \centering
    \vspace*{-2.0\baselineskip}%
    \includegraphics[width=.45\linewidth]{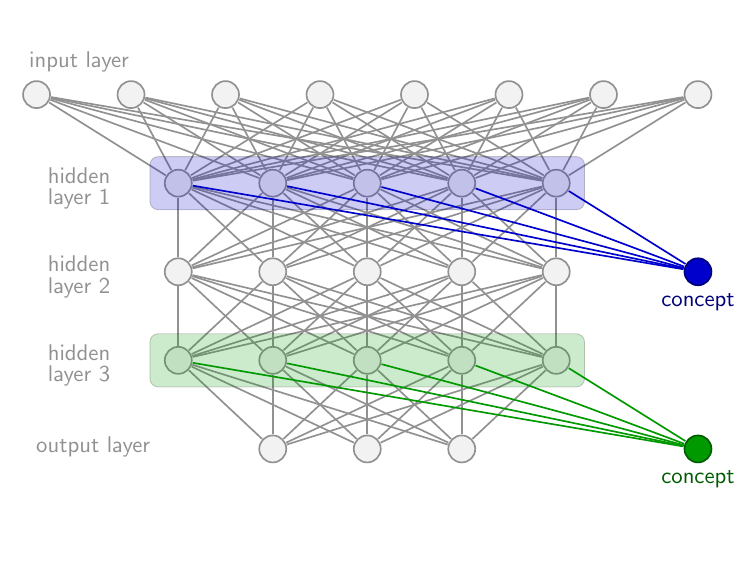}%
    \\\vspace*{-2.0\baselineskip}%
    \caption{\textbf{Post-hoc Concept Extraction:}
    Two neurons for concept output (\textcolor{blue}{blue} and \textcolor{green}{green} neurons) are post-hoc added to the trained NN (\textcolor{gray}{gray}) using newly trained connections to hidden layers \textcolor{blue}{1}/\textcolor{green}{3}.
    }
    \label{fig:post-hoc-network}
\end{figure}

\paragraph{T-Norm Fuzzy Logic}
\label{subsec:fuzzy-logic}
The standard Boolean logical connectives (\emph{and} $\wedge$, \emph{or} $\vee$, \emph{not} $\neg$) can only operate on binary truth values in $\mathbb{B}=\{0,1\}$. T-norm fuzzy logics extend the connectives to many-valued truth values in $\mathbb{B}=[0,1]$ using a so-called t-norm $\wedge_t\colon [0,1]\times [0,1]\to [0,1]$ to replace the $\wedge$. A valid t-norm must be monotonic, commutative and associative, have a neutral element (the $1$), and match $\wedge$ on Boolean values.
Typical choices for $a\wedge_t b$ are Product ($a\cdot b$), Łukasiewicz ($\max(0,a+b-1)$), and Gödel ($\min(a,b)$) t-norms \cite{hajek_metamathematics_1998}, since these form a generating system for all continuous t-norms.
Given a $\wedge_t$, then $\neg a \coloneqq 1-a$, $\vee_t,\implies_t\colon[0,1]^2\to[0,1]$ can be derived and maintain desirable properties, giving the resulting t-norm logic.

Desirable properties for use of t-norm logic with NN classification outputs are:
(1) The NN typically produces a confidence prediction in $[0,1]$ instead of a Boolean value, which can be propagated by t-norm fuzzy logic to the confidence of entire logical expressions.
(2) The classicale piece-wise continuous t-norm logic connectives are also piece-wise differentiable like ReLU activations of NNs. So, they can directly be used in backpropagation \cite{badreddine2022logic}.

\section{Approach}
In this chapter we first define our new notion of concept-based AAs. Then we show that standard AAs are a special case, and existing attack techniques can easily be adopted to our new attack.

\subsection{Concept-based Property Attacks}

These classical AA types can also be interpreted as special cases of \emph{property attacks}, where class predictions are treated as logical literals.
Using fuzzy logic (see \autoref{subsec:fuzzy-logic}), we can evaluate logical expressions over outputs using a function $\text{solve}_\rho \colon \mathcal{Y} \to \mathbb{B}$ that returns the truth value of a property~$\rho$.
A property attack falsifies a given property, i.e., $\mathcal{Y}^- = \mathcal{Y}^-_\rho = \{ \mathbf{y} \in \mathcal{Y} \mid \text{solve}_{\lnot \rho}(\mathbf{y}) \}$.
Untargeted and targeted attacks correspond to properties $\rho = l$ and $\rho = \lnot l$, respectively.

This perspective allows adversarial examples to be crafted with higher-order conditions — e.g., enforcing both ``dog''~($d$) and ``cat''~($c$) simultaneously.
The corresponding attacked property is its logical negation: $\rho = \lnot d \lor \lnot c$.

% mention that concept-based properties are a special case of Horn clauses and therefore the main way how the logic of deriving an output would be logically formalized. This makes them especially intuitive and useful for verification.
% nice naming: in implications p\to q, the subformula p is referred to as the antecedent and q is termed the consequent
The point of view of property attacks can also be applied to NNs that are boosted with XAI techniques.
The additional concept outputs can also be used as well as the original task output of the NN to define property attacks--- \emph{Concept-based Property Attacks} (\ConPAtt{}).
For denoting the properties we propose to use the following intuitive and convenient implication format generalizing our introductory examples (all logical expressions can be reformulated like this, see \autoref{lemma:implication-form}).
Note that for simplicity we shorten $f_c(\mathbf{x})$ to $c$, and $(\neg)c$ shorthands possibly negated $c$.

\begin{lemma}
    \label{lemma:implication-form}
    Each  logical expression~$\varphi$ with two disjoint literal sets $C$ and $L$ can be reformulated into a term of conjunctively linked implication terms where antecedents consist only of conjunctively linked, possibly negated literals of $C$, and consequences consist only of disjunctively linked, possibly negated literals of $L$.
\end{lemma}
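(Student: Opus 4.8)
The plan is to prove this as a classical propositional normal-form result and then read it back into the attacked-property notation; the Boolean reformulation is what matters here, since the fuzzy $\text{solve}_\rho$ of \autoref{subsec:fuzzy-logic} is applied only afterwards. First I would invoke the existence of a conjunctive normal form: every propositional expression $\varphi$ over the atoms in $C\cup L$ is logically equivalent to a conjunction of clauses $\varphi\equiv\bigwedge_i D_i$, where each clause $D_i$ is a disjunction of possibly negated literals drawn from $C\cup L$. This is the only genuinely nontrivial ingredient, and it is entirely standard.

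The second step is purely mechanical clause surgery, enabled by the disjointness of $C$ and $L$. For each clause I would separate the literals by their origin, writing $D_i\equiv A_i\vee B_i$, where $A_i$ collects the $C$-literals and $B_i$ the $L$-literals occurring in $D_i$. The equivalence $A\vee B\equiv\neg A\rightarrow B$ then turns each clause into $\neg A_i\rightarrow B_i$, and applying De Morgan to $\neg A_i$ converts the negated disjunction of $C$-literals into a conjunction of (re-negated) $C$-literals. Hence each clause becomes an implication whose antecedent is a conjunction of possibly negated literals of $C$ and whose consequent is the disjunction $B_i$ of possibly negated literals of $L$ — exactly the demanded shape. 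Conjoining over $i$ reassembles $\varphi$ as a conjunction of such implications, which establishes the claim.

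The part needing care, and the only real obstacle, is the degenerate clauses. If a clause contains no $C$-literal, its antecedent is the empty conjunction $\top$, and $\top\rightarrow B_i$ collapses to the pure $L$-disjunction $B_i$, which is still of the admissible form; symmetrically, if a clause contains no $L$-literal, its consequent is the empty disjunction $\bot$, so the implication reduces to $\neg A_i$. I would resolve these by the standard convention that empty conjunctions and disjunctions evaluate to the neutral elements $1$ and $0$ — consistent with the t-norm reading in \autoref{subsec:fuzzy-logic} — either permitting $\bot$ as a consequent or recording such a clause as a constraint on $C$ with empty consequent. With this convention the reformulation is total, and no clause is lost.
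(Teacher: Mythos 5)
Your proof is correct and follows essentially the same route as the paper's: conversion to conjunctive normal form, partitioning each clause's literals by origin into $C$- and $L$-parts, and rewriting $\lnot\alpha_i \lor \beta_i$ as $\alpha_i \implies \beta_i$ via De Morgan. Your explicit treatment of degenerate clauses (empty antecedent as $\top$, empty consequent as $\bot$) is a careful addition that the paper leaves implicit in its notation $C_i \subseteq C$, $L_i \subseteq L$, but it changes nothing substantive.
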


\begin{proof}
Each logical expression can be reformulated into the conjunctive normal form
%\begin{equation}
$\textstyle
    \varphi \equiv
    \bigwedge_{i > 0}
    (%\left(
    \bigvee_{c \in C_i \subseteq C} (\lnot) c 
    \bigvee_{l \in L_i \subseteq L} (\lnot) l
    )%\right)
$.
%\end{equation}
%
Let us introduce two additional variable families $\alpha_i, \beta_i$ that condense the disjunctive subformulas:
\begin{align}
    %\SwapAboveDisplaySkip
    \alpha_i &\coloneqq \lnot \bigvee\limits_{c \in C_i \subseteq C} (\lnot) c  
    \equiv  \bigwedge\limits_{c \in C_i \subseteq C} (\lnot) c &%\\
    \beta_i &\coloneqq \bigvee\limits_{l \in L_i \subseteq L} (\lnot) l
\end{align}
The subformulas can be replaced by these variables and the whole logical expression~$\varphi$ reformulates to
%\begin{align}
$
    \varphi 
    %&
    \equiv \bigwedge\limits_{i > 0} \left(\lnot \alpha_{i} \lor \beta_{i}\right)
    %\\&
    \equiv \bigwedge\limits_{i > 0} \left(\alpha_{i} \implies \beta_{i}\right)
$.
%\end{align}
\end{proof}

\begin{definition}[Concept-based property]
\label{def:concept-based-property}
% any rule using the concepts (including final output concepts) of a DNN;
% in our case formulated as set of implications where the consequent may be a disjunction of (possibly negated) concepts
A concept-based property~$\varphi$ is a logical expression with two disjoint literal sets $C$---the concept literals---and $L$---the task literals---in the form of conjunctively linked implication terms whose antecedents consist only of conjunctively linked, possibly negated concept literals and whose consequences consist only of disjunctively linked, possibly negated task literals.
\begin{align}
    %\SwapAboveDisplaySkip
    \varphi \coloneqq \bigwedge\limits_{i > 0} \left(\alpha_{i} \implies \beta_{i}\right)\;,
    \quad\text{ with }\quad
    \alpha_i \coloneqq \bigwedge\limits_{\mathclap{c \in C_i \subseteq C}} (\lnot) c
    \quad\text{ and }\quad
    \beta_i \coloneqq  \bigvee\limits_{\mathclap{l \in L_i \subseteq L}} (\lnot) l
\end{align}
\end{definition}

\begin{definition}[Concept-based Property Attack]
  Let $\text{solve}_\varphi \colon \mathcal{Z} \to \mathbb{B}$ be the function to calculate the truth value of a concept-based property~$\varphi$ which evaluates to true at an input $\mathbf{x}$, and $o$ a minimality measure for perturbations $\epsilon$.
  A Concept-based Property Attack of $\varphi$ is the search for a $o$-minimal perturbation $\epsilon$ to an input $\mathbf{x}$ into an adversarial example~$\mathbf{x}^{\text{adv}}=\mathbf{x}+\epsilon$ which falsifies $\varphi$, i.e.,
  lies in the malicious output set
  \begin{align}
    \SwapAboveDisplaySkip
    \mathcal{Z}^- = \mathcal{Z}^-_\varphi =  \lbrace \mathbf{z} \in \mathcal{Z} \mid  \text{solve}_{\lnot \varphi}(\mathbf{z}) \rbrace
  \end{align}
\end{definition}

Intuitively, a \ConPAtt{} adversarial example $\mathbf{x}^{\text{adv}}$ to $\phi=(\bigwedge_{c}c\implies\bigvee_{l}l)$ like $\ConceptTerm{red}\wedge \ConceptTerm{octogonal}\implies \ConceptTerm{stop\_sign}$, causes the NN to predict all $c$ as true, and all $l$ as false.
This can happen if (1) some $c$ is predicted true even though it should be false (e.g., \ConceptTerm{red} predicted true even though the change $\epsilon$ turned the sign \ConceptTerm{gray}), and/or (2) some $l$ is predicted negative even though it should be positive (e.g., \ConceptTerm{stop\_sign} flipped to false).

\subsection{\ConPAtt{}s as Generalized Adversarial Attacks}
Note that falsifying one implication term is enough to falsify a concept-based property and thus, it is sufficient to consider one implication $\phi=\alpha\implies \beta$ for an attack.
The set of adversarial example task outputs can be derived from this definition, i.e. $\mathcal{Y}^-_{\varphi} \coloneqq \lbrace \mathbf{y} \in \mathcal{Y}  \mid (\mathbf{y}, \mathbf{c})  \in \mathcal{Z}^-_{\varphi} \rbrace$. Furthermore:
\begin{theorem}\label{thm:special-cases}
  Standard targeted and untargeted AAs are special cases of \ConPAtt{}.
\end{theorem}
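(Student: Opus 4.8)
The plan is to exhibit, for each of the two standard attack types, a concept-based property $\varphi$ whose malicious set $\mathcal{Z}^-_\varphi$ projects onto exactly the task-output malicious set $\mathcal{Y}^-$ of that attack. The paper already remarked that plain property attacks correspond to the choices $\rho=l$ (untargeted) and $\rho=\lnot l$ (targeted), so the real content is the observation that a \ConPAtt\ with empty concept literal set subsumes property attacks. Concretely, \autoref{def:concept-based-property} permits $C=\emptyset$ and a single implication term, so I would use an \emph{empty antecedent}: taking $C_i=\emptyset$ makes $\alpha_i=\bigwedge_{c\in\emptyset}(\lnot)c$ the empty conjunction, which equals the neutral element $\top$ of the t-norm (the $1$, cf. \autoref{subsec:fuzzy-logic}). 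Then $\alpha_i\implies\beta_i\equiv\top\implies\beta_i\equiv\beta_i$, and the whole property collapses to a single task-literal formula $\varphi\equiv\beta$.

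First I would treat the untargeted case. Here the benign behavior to be preserved is correct prediction of the true class $l$, so I set $\beta=l$, giving $\varphi\equiv l$. By the semantics of $\text{solve}$, the literal $l$ evaluates to true at $\mathbf{z}$ exactly when $p_l\geq\theta_l$, hence $\text{solve}_{\lnot\varphi}(\mathbf{z})$ holds iff $p_l<\theta_l$. Because $\varphi$ mentions no concept, $\mathcal{Z}^-_\varphi=\mathcal{Y}^-\times\mathcal{C}$ with $\mathcal{Y}^-=\{\mathbf{y}\mid p_l(\mathbf{y})<\theta_l\}$, which is precisely the untargeted malicious set introduced earlier; taking the projection $\mathcal{Y}^-_\varphi$ recovers it verbatim.

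Second, the targeted case is symmetric: to force an incorrect class $l'$ I set $\beta=\lnot l'$, so that $\varphi\equiv\lnot l'$ and $\text{solve}_{\lnot\varphi}(\mathbf{z})$ holds iff $l'$ is true, i.e. $p_{l'}\geq\theta_{l'}$. This reproduces the targeted malicious set $\{\mathbf{y}\mid p_{l'}(\mathbf{y})\geq\theta_{l'}\}$, again independently of the concept coordinates. In both cases the minimality objective $o$ and the neighborhood search are untouched, so the induced optimization problem coincides with the standard one, and the reductions establish that \ConPAtt\ is at least as expressive.

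I expect the only genuine subtlety to lie not in the substitution but in making the semantic bridge precise: one must fix that $\text{solve}$ interprets a task literal $l$ through the threshold test $p_l\geq\theta_l$, so that the Boolean (or fuzzy) truth value of a literal coincides with class acceptance, and that the empty conjunction is read as $\top$ via the t-norm's neutral element. Once these conventions are in place the equality of malicious sets is immediate, and the theorem follows by noting that \ConPAtt\ additionally permits nonempty $C$ and several conjunctively linked implication terms, so it strictly generalizes both standard attack formulations.
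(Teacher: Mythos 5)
Your proposal is correct and follows essentially the same route as the paper's own proof: instantiate the single-implication property with an empty antecedent (the paper's $\alpha\equiv\text{true}$, your empty conjunction read as $\top$) and set the consequent to $l$ for untargeted and $\lnot l$ for targeted attacks. Your added care in spelling out the $\text{solve}$ semantics via the threshold tests and the projection onto $\mathcal{Y}^-_\varphi$ only makes explicit what the paper leaves implicit.
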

\begin{proof}
    First note the two special cases of \ConPAtt{} where only a single task literal is used:
    \begin{enumerate}[nosep, noitemsep]
        \item \textbf{Generalized untargeted AAs}: $\alpha \implies l$.
        \item \textbf{Generalized targeted AAs}: $\alpha \implies \lnot l$.
    \end{enumerate}
    Un-/targeted respective are generalized un-/targeted AAs with $\alpha \equiv \text{true}$, i.e., no concept restriction.
\end{proof}

A neat property of \ConPAtt{}s is that the search space is generally reduced compared to vanilla AAs:
\begin{theorem}\label{thm:reduced-search-space}
  The task output spaces of adversarial examples for generalized untargeted/targeted AAs are smaller than or equal to those for standard untargeted/targeted AAs.
  \begin{align*}
  \SwapAboveDisplaySkip
    \mathcal{Y}^-_{\alpha \implies l} \subseteq \mathcal{Y}^-_{l} \qquad \mathcal{Y}^-_{\alpha \implies \lnot l} \subseteq \mathcal{Y}^-_{\lnot l}
  \end{align*}
\end{theorem}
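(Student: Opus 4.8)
The plan is to reduce the claim about the task-output spaces to a purely logical containment of the malicious sets inside the full output space $\mathcal{Z}=\mathcal{Y}\times\mathcal{C}$, and then transport that containment along the coordinate projection onto $\mathcal{Y}$. First I would unfold the definitions. Using the implication form of \autoref{lemma:implication-form} together with the standard equivalence $\lnot(\alpha\implies\beta)\equiv\alpha\wedge\lnot\beta$, and because $\text{solve}$ returns Boolean truth values (so classical equivalences apply at the level of the malicious set, even though the fuzzy relaxation is only used to make the search differentiable), falsifying $\alpha\implies l$ is exactly asserting $\alpha\wedge\lnot l$, whereas falsifying the bare literal $l$ is just asserting $\lnot l$. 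Hence $\mathcal{Z}^-_{\alpha\implies l}=\{\mathbf{z}\in\mathcal{Z}\mid \text{solve}_{\alpha\wedge\lnot l}(\mathbf{z})\}$ and $\mathcal{Z}^-_{l}=\{\mathbf{z}\in\mathcal{Z}\mid \text{solve}_{\lnot l}(\mathbf{z})\}$.

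The core step is then the observation that dropping a conjunct only weakens the constraint: since $\alpha\wedge\lnot l\models\lnot l$, every $\mathbf{z}$ in the left-hand set lies in the right-hand one, giving $\mathcal{Z}^-_{\alpha\implies l}\subseteq\mathcal{Z}^-_{l}$. By definition the task-output set is the image of the malicious set under the projection $\pi_{\mathcal{Y}}\colon\mathcal{Z}\to\mathcal{Y}$, i.e.\ $\mathbf{y}\in\mathcal{Y}^-_\varphi$ iff some $(\mathbf{y},\mathbf{c})\in\mathcal{Z}^-_\varphi$; since $A\subseteq B$ implies $\pi_{\mathcal{Y}}(A)\subseteq\pi_{\mathcal{Y}}(B)$, the inclusion transports to $\mathcal{Y}^-_{\alpha\implies l}\subseteq\mathcal{Y}^-_{l}$. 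The targeted case is symmetric: with $\beta=\lnot l$ the negation identity gives $\lnot(\alpha\implies\lnot l)\equiv\alpha\wedge l\models l$, so $\mathcal{Z}^-_{\alpha\implies\lnot l}\subseteq\mathcal{Z}^-_{\lnot l}$ and projecting yields $\mathcal{Y}^-_{\alpha\implies\lnot l}\subseteq\mathcal{Y}^-_{\lnot l}$.

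I do not expect a computational obstacle here; the care is entirely semantic. The comparison must be read over the \emph{output sets} rather than NN-reachable outputs, so that $\alpha$ (over the concept literals $C$) and $l$ (over the disjoint task literals $L$) constrain different coordinates of $\mathbf{z}=(\mathbf{y},\mathbf{c})$ — this orthogonality is precisely why the added antecedent can only shrink the set. The subtlety worth flagging, and the main thing to get right, is \emph{where} the reduction actually bites: on the full space $\mathcal{Z}$ the containment is genuine, but on the $\mathcal{Y}$-projection it collapses to equality whenever $\alpha$ is satisfiable over $\mathcal{C}$, since then every $\mathbf{y}$ with $\lnot l$ (resp.\ $l$) still admits a malicious completion $\mathbf{c}$. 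I would therefore prove the $\subseteq$ as stated and add a one-line remark that strictness on the projection requires $\alpha$ to be unsatisfiable over $\mathcal{C}$, while the meaningful tightening of the \ConPAtt{} search space is the strict containment living in $\mathcal{Z}$.
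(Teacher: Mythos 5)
Your proof is correct and takes essentially the same route as the paper's: negate the implication to $\alpha \wedge \lnot l$ (resp.\ $\alpha \wedge l$), observe that dropping the conjunct $\alpha$ only weakens the condition so $\mathcal{Z}^-_{\alpha \implies l} \subseteq \mathcal{Z}^-_{l}$, and transport the inclusion to the task output spaces $\mathcal{Y}^-$ (a projection step the paper asserts implicitly and you make explicit). Your closing remark---that on the $\mathcal{Y}$-projection the inclusion degenerates to equality whenever $\alpha$ is satisfiable over $\mathcal{C}$, so the genuine tightening lives in $\mathcal{Z}$---is a correct sharpening that the paper's proof glosses over.
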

\begin{proof}
Let us first look at generalized untargeted AA properties like $\alpha \implies l$.
Each adversarial example must lack class prediction~$l$ but requires concept predictions~$\alpha$, i.e., they satisfy the property $\alpha \land \lnot l$.
In contrast to that, standard untargeted AAs only require the misclassification of $l$, i.e. each adversarial example satisfies $\lnot l$ and they accept adversarial examples that do not additionally fulfill $\alpha$.
It follows that the valid output space of adversarial examples for generalized untargeted AAs~$\mathcal{Z}^-_{\alpha \implies l}$ is smaller than or equal to that for standard untargeted AAs~$\mathcal{Z}^-_{l}$ as well as for their valid task output spaces $\mathcal{Y}^-_{\alpha \implies l} \subseteq \mathcal{Y}^-_{l}$.

In this explanation, it does not matter whether both adversarial examples expect a misclassification~$\lnot l$ or a specific task output~$l$.
That is why this relation also applies between generalized targeted AAs and standard targeted AAs, i.e. $\mathcal{Y}^-_{\alpha \implies \lnot l} \subseteq \mathcal{Y}^-_{\lnot l}$.
\end{proof}

\begin{comment}
\begin{lemma}
    % TODO: Formalize "this behaviour is empirically unlikely."
\end{lemma}
\begin{proof}
We observe how the concept space empirically develops across the layers.
At the beginning, we assume that the concept space takes up half of a layer and look how this space is empirically changed to the next layer under the assumption that the weights are normally distributed and the ReLU activation is used.
Based on the paper \cite{fortuin_bayesian_2021}, which identified a Laplace distribution in the lower layers, approaching a normal distribution in the later layers, a normal distribution should provide a good upper bound for the weight distribution.
The \autoref{fig:concept-propagation} illustrates how the reachable space of hidden layers for a concept and its negation is likely to change as it propagates through the NN, assuming that the NN makes little change from layer to layer.
The relationship between concept and non-concept changes only slightly from one layer to the next.
Even across multiple layers, one of the half-spaces should not disappear, since the ratio across the layers sometimes becomes smaller and sometimes larger and, in total, balances out statically.
\end{proof}

The expected restriction of the search space has the potential to bring substantial efficiency improvements when using formal verification techniques tailored to NNs. % TODO: Do we have concrete numbers/formulas for the complexity of some standard techniques?

\end{comment}

\paragraph{\ConPAtt{} Procedure}
\ConPAtt{} can be easily performed with any existing AA approach.
The trick is to use the result of the (partially) differentiable fuzzy operation $\varphi\circ (f, f_C)\colon\mathcal{X}\to \mathbb{B}$ instead of the output of the NN.
This makes \ConPAtt{} a targeted AA with the expected result False or 0 for the adversarial examples.

\begin{comment}
The concepts of interest might be encoded in early layers, in principle risking that the concept-positive areas and thus the search space increase in the course of propagating through further layers.
This is a trade-off against the potentially better classification performance of the concept's confidence level function in non-final layers, decreasing epistemic uncertainty in the search results.

However, we suggest to leverage this for a modularization of the search problem: The first step is a search in the earliest latent space $\mathcal{L}$ involved in producing a concept. A counterexample $v\in \mathcal{L}$ can subsequently be backtraced to its preimage in the input space. % TODO: reference visualization

In the following, two alternatives are presented how existing verification techniques can efficiently be applied to our newly defined attack problem.

\subsubsection{Optimization-based Formulation: A Differentiable Implementation}
We here show how an end-to-end formulation of the optimization can be obtained to directly apply existing NN formal or semi-formal verification techniques.

% TODO: fuzzy Logic Approach

\end{comment}

\section{Discussion and Outlook: \ConPAtt{} for Adversarial Training}
In the following we discuss further what practical benefits we expect from this more general formulation of attack goals, how this could be evaluated, and which challenges are still open.
%In the following we discuss further practical benefits, an evaluation setup, and open challenges.

\subsection{Hypothesized Benefits of \ConPAtt{}s}

We hypothesize that
\begin{itemize}[nosep,noitemsep]
    \item \emph{generalized (un-)targeted AAs with at least one concept \textbf{reduce the search space} for adversarial examples not only theoretically but also empirically},
    \item \emph{the adversarial examples obtained via \ConPAtt{} are \textbf{particularly efficient for retraining} because they are pinpoint adversarial examples with a high information content}. % TODO: possibly formalize more and put into separate environment
\end{itemize}

\paragraph{\ConPAtt{}s versus Standard AAs:}
To understand above claims, one should first have a closer look at the vulnerabilities that can be exploited for a successfull \ConPAtt{} attack against a concept-based property $\phi=(\alpha\implies l)$. Standard AAs capture any cases, where the final output $l$ is changed, regardless of whether this resulted in illogical behavior breaking $\phi$ or not. Thus, standard AAs may primarily focus on turning off causally related early-layer concepts, i.e., falsifying $\alpha$ to falsify $l$. For example, falsify \ConceptTerm{red} to cause a negative output of \ConceptTerm{stop\_sign}.
This is not sufficient for a \ConPAtt{} to $\phi$, for which not only $l$ must become false, but simultaneously $\alpha$ must remain true (cf.\ \autoref{thm:reduced-search-space}). It is therefore not guaranteed that one obtains the same results for \ConPAtt{}s against any of the following concept-based properties:
\begin{itemize}[nosep]
  \item $\phi_l = (\text{true}\implies l)$, which is the standard AA against the output $l$,
  \item $\phi_\alpha = (\neg\alpha\implies \text{false})$, which is the standard AA against the concept outputs, i.e., the attack flips any concept $c$ in the conjunction $\alpha=\bigwedge_c c$ to false, and
  \item $\phi = (\alpha\implies l)$, which is a generalized concept-based property attack.
\end{itemize}
Whether the obtained adversarial examples are similar depends on whether it is easier to attack concepts---then falsifying $\phi_l$ and $\phi_\alpha$ should yield similar results---or logics, in which case falsifying $\phi_l$ and $\phi$ are expected to yield similar results. Since concepts themselves represent noisy variables with non-perfect accuracy, chances are high that attacking concepts generally is easier than attacking logics.
Our \ConPAtt{} framework provides the option to test and train on these different rules individually, and hence distinguish more finegrained between simply attacking the concepts or outputs, and truly attacking internal logics.

%Also, ConPAtts allow one to unravel different vulnerabilities using respectively designed rules:
%As seen in \autoref{thm:special-cases}, attacks to outputs and concepts can be encoded in their own rules $\phi_{\text{output}},\phi_{\text{concept}}$, so
%$\neg\phi_{\text{output}}\wedge\phi_{\text{concept}}.
% giving the chance to control the vulnerabilities individually. % TODO: write these down

\paragraph{Benefits of Targeting Logics:}
One reason for both of the claims is on semantic level: Human-defined properties typically encode important knowledge about the task at hand, thus should strengthen both the adherence to the properties and indirectly the actual main task of the network.
Given that well-generalizing NNs typically adopt this knowledge to large extend, the cases of logic breaches should be few but meaningful.
This would make \textbf{attacking logics especially beneficial for retraining purposes} similar to adversarial training \cite{khamaiseh2022adversarial,tramer_ensemble_2018}.

\paragraph{Benefits for Computational Efficiency:} Also, here directly benefit from low integration overhead:
(1) Preparation only requires cheap post-hoc concept extraction;
(2) Only very few additional operations (the $f_{i\to c}$) are added that need backpropagation/-tracing if gradient-based attack methods are used; and
(3) The beneficial formulation of concepts as half-spaces in latent spaces allows efficient reachability analysis with substantial reduction in the search space as illustrated in \autoref{fig:concept-propagation} and sketched in \autoref{app:reachability-search-consideration}.
Next steps should empirically test the attack success and the effect of retraining with adversarial examples of this approach.

\begin{figure}[t]
    \centering
        \centering%
        \vspace*{-1.5eM}%
        \includegraphics[width=.7\linewidth]{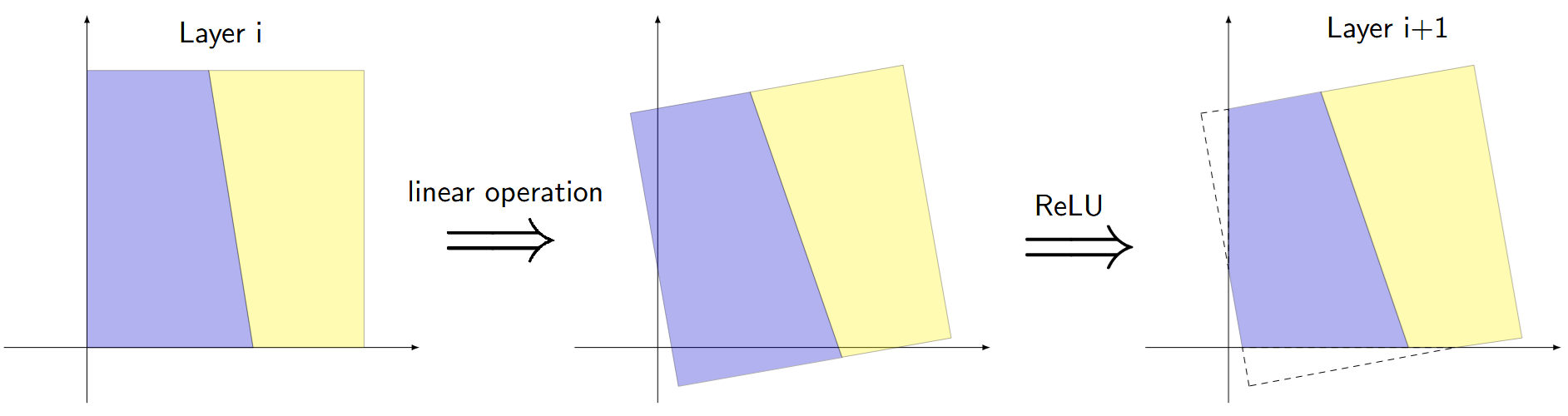}%
        %\\\vspace*{-12\baselineskip}%
        \vspace*{-.5eM}%
        \caption{
        \textbf{Concept Propagation:}
        Illustration how the concept and non-concept half-spaces propagate from layer to layer using linear operations like convolutions (left to mid) combined with ReLU activation (mid to right). Concretely, ReLUs add additional bends of wide angle to the decision boundary.}
        \label{fig:concept-propagation}
\end{figure}

\subsection{Future Work: Evaluation and Challenges}

\paragraph{Planned Experimental Setting:}
We suggest to evaluate several aspects to ensure a comprehensive assessment.
As metrics, we consider both task performance and rule adherence, measured through accuracy and Intersection-over-Union (IoU) for task prediction as well as rule satisfaction.
In addition, we track the success of adversarial attacks before retraining, as well as the effectiveness of defences and the accuracy of concepts after retraining.
For evaluation, we draw on three established \textbf{datasets}: MNIST \cite{deng_mnist_2012}, GTSRB \cite{Stallkamp2012}, and ImageNet \cite{russakovsky_imagenet_2015}.
The \textbf{models} include self-trained simple architectures for MNIST and GTSRB, as well as a range of widely used ImageNet classifiers:
Inception-v3  \cite{szegedy_rethinking_2016},
Inception-v4 \cite{szegedy_inception-v4_2017},
Inception-Resnet-v2 \cite{szegedy_inception-v4_2017},
Resnet-v2-101 \cite{he_deep_2016}
and the ensemble-based variants 
Inception v3$_{ens3}$,
Inception v3$_{ens4}$
and IncRes v2$_{ens}$ \cite{tramer_ensemble_2018}.
For baselines, we rely on several state-of-the-art adversarial attack methods,
namely SGM \cite{wu_skip_2019},
VMI-FGSM and VNI-FGSM \cite{wang_enhancing_2021},
L2T \cite{zhu_learning_2024},
and BSR \cite{wang_boosting_2024}.

The attacked concept-based properties reflect both simple and more complex relations.
Examples include that class \ConceptTerm{1} implies the concept \ConceptTerm{line}, that classes \ConceptTerm{1} and \ConceptTerm{2} should never be predicted simultaneously (i.e., $\lnot 1 \lor \lnot 2$), and that the concepts \ConceptTerm{red}, \ConceptTerm{octagon}, and \ConceptTerm{stop\_label} together imply \ConceptTerm{stop\_sign}.

\begin{comment}

\subsection{Planned Experimental Setting}
\label{app:experimental-setting}

\paragraph{Metrics:} Accuracy/IoU of task-prediction and rule-satisfaction and Attack success before retraining as well as defensive success and concept accuracy after retraining
\paragraph{Datasets:}  MNIST \cite{deng_mnist_2012}, GTSRB \cite{Stallkamp2012} and ImageNet \cite{russakovsky_imagenet_2015}. 
\paragraph{Models:} Self-trained simple models for MNIST and GTSRB,          Inception-v3  \cite{szegedy_rethinking_2016}, Inception-v4 \cite{szegedy_inception-v4_2017}, Inception-Resnet-v2 \cite{szegedy_inception-v4_2017},  Resnet-v2-101 \cite{he_deep_2016}, Inception v3$_{ens3}$ \cite{tramer_ensemble_2018}, Inception v3$_{ens4}$ \cite{tramer_ensemble_2018} and IncRes v2$_{ens}$ \cite{tramer_ensemble_2018}
\paragraph{Baselines:} SGM \cite{wu_skip_2019}, VMI-FGSM / VNI-FGSM \cite{wang_enhancing_2021}, L2T \cite{zhu_learning_2024} and BSR \cite{wang_boosting_2024}
\paragraph{Attacked Concept-base Properties:}
All models will be under attack of various different properties such as the class \ConceptTerm{1} implies the concept \ConceptTerm{line}, class \ConceptTerm{1} and \ConceptTerm{2} should never be predicted at the same time (i.e. $\lnot 1 \lor \lnot 2$) and the concepts \ConceptTerm{red}, \ConceptTerm{octagon} and \ConceptTerm{stop\_label} predict together the class \ConceptTerm{stop\_sign}.
% Full paper: Consider moving the concrete rules into a table / listing with all rules; and potentially pushing them except for 1 example per dataset to the appendix.
\end{comment}

\paragraph{Challenges and further Future Work:}
%% Comment Gesina: None of the below formulates any "risk" (a risk is clearly defined in verification and safety terms!). Also, no limitations are formulated: Potentially yielding the same output as something else is not a limitation. Potentially yielding the same output but being more costly / less controllable / ... is a limitation.
% While our approach introduces a novel perspective by attacking models, it also entails several risks and limitations.
%
%
As explained above, it is expected that \ConPAtt{}s not necessarily yield the same results as standard AAs that attack outputs or concepts. In addition to above experiments, one could contrastively compare results for the different attacks for insights how large the gap truly is.
%
%% Gesina: No, see above.
%First, since both concept predictions and model outputs are known to be error-prone, AAs on rules may in practice yield similar outcomes to direct attacks on either predictions or concepts alone.
%
%
%% Comment Gesina: No, to what I understand it should be the other way round, see argument above.
%Moreover, concepts themselves represent additional noisy variables, as perfect accuracy is unattainable, which may lead to adversarial examples that provide little additional insight compared to conventional attacks.
%
%
%It should be noted that, although restricting the search space via concepts formally reduces the degrees of freedom, the lack of precision in concept prediction may still result in adversarial examples where the real intended concept is
%% Comment Gesina: What does "not actually dominant" here mean?
%not actually dominant.
%
%
% Another limitation lies in the need to assess rule satisfaction prior to attack execution: if a rule is already frequently violated by the original model, attacking it becomes
%% Comment Gesina: What is "redundant" supposed to mean here? Uncovering an issue is redundant to the issue?!
% redundant.
%
%
However, a considerable challenge for the experimental evaluation is that retraining procedures may need to be adapted:
%% Comment Gesina: The freezing is not necessary because the attack targets a specific part of the network. Whatever part of the network it targets, that should be retrained to fix it. The reason why the freezing or alternating might be necessary is to maintain the concept accuracy!
% as adversarial examples found against the original network could stem from concept prediction errors rather than task misclassifications.
% In such cases, retraining should consider the post-hoc setting, with frozen concept extraction function to ensure stability of the underlying representations.
(Adversarially) retraining with respect to the \emph{task output} might accidentally destroy the post-hoc attached \emph{concept outputs}. Countermeasures might be to freeze earlier NN parts up to the concept prediction, or alternatingly or simultaneously retrain the NN and the concept predictors. Experiments must show how to balance need for concept labels with concept accuracy during adversarial finetuning.

\section{Conclusion}

In this position paper, we introduce a novel generalized adversarial attack goal: Instead of targeting a change in (respectively falsification of) the output class, our attacks aim to falsify the compliance of the NN with prior symbolic knowledge on sufficient indicators for an output class.
Standard AAs are shown to be a specific case of our generalized formulation for concept-based properties.
Also, these allow to substantially reduce the expected search space of the AA search with increasing number of concepts.
Also, we argue that these concept-based properties provide a more natural and human-aligned target for AAs. This suggests that they might be particularly suited for NN robustification via adversarial model (re)training or runtime monitoring.
%AA just from input space to output space but can be split by the latent space of the used concepts.

\begin{acknowledgments}
    This work was supported through the junior research group project “chAI” funded by the German Federal Ministry of Research, Technology and Space (BMFTR), grant no. 01IS24058. The authors are solely responsible for the content of this publication.
\end{acknowledgments}

%% The declaration on generative AI comes in effect
%% in Janary 2025. See also
%% https://ceur-ws.org/GenAI/Policy.html
\section*{Declaration on Generative AI}
 During the preparation of this work, the author used ChatGPT based on GPT-4o in order to: Improve writing style. After using these tool(s)/service(s), the author reviewed and edited the content as needed and takes full responsibility for the publication’s content.

% TODO: for camera-ready move acknowledgments and declaration back to main paper
%%
%% The acknowledgments section is defined using the "acknowledgments" environment
%% (and NOT an unnumbered section). This ensures the proper
%% identification of the section in the article metadata, and the
%% consistent spelling of the heading.

%%
%% Define the bibliography file to be used
\bibliography{main}

\begin{thebibliography}{64}
\expandafter\ifx\csname natexlab\endcsname\relax\def\natexlab#1{#1}\fi
\providecommand{\url}[1]{\texttt{#1}}
\providecommand{\href}[2]{#2}
\providecommand{\path}[1]{#1}
\providecommand{\DOIprefix}{doi:}
\providecommand{\ArXivprefix}{arXiv:}
\providecommand{\URLprefix}{URL: }
\providecommand{\Pubmedprefix}{pmid:}
\providecommand{\doi}[1]{\href{http://dx.doi.org/#1}{\path{#1}}}
\providecommand{\Pubmed}[1]{\href{pmid:#1}{\path{#1}}}
\providecommand{\bibinfo}[2]{#2}
\ifx\xfnm\relax \def\xfnm[#1]{\unskip,\space#1}\fi
%Type = Article
\bibitem[{Rech(2024)}]{rech_artificial_2024}
\bibinfo{author}{P.~Rech},
\newblock \bibinfo{title}{Artificial {Neural} {Networks} for {Space} and {Safety}-{Critical} {Applications}: {Reliability} {Issues} and {Potential} {Solutions}},
\newblock \bibinfo{journal}{IEEE Transactions on Nuclear Science} \bibinfo{volume}{71} (\bibinfo{year}{2024}) \bibinfo{pages}{377--404}. \URLprefix \url{https://ieeexplore.ieee.org/abstract/document/10380628}. \DOIprefix\doi{10.1109/TNS.2024.3349956}.
%Type = Inproceedings
\bibitem[{Suryanto et~al.(2022)Suryanto, Kim, Kang, Larasati, Yun, Le, Yang, Oh, and Kim}]{suryanto_dta_2022}
\bibinfo{author}{N.~Suryanto}, \bibinfo{author}{Y.~Kim}, \bibinfo{author}{H.~Kang}, \bibinfo{author}{H.~T. Larasati}, \bibinfo{author}{Y.~Yun}, \bibinfo{author}{T.-T.-H. Le}, \bibinfo{author}{H.~Yang}, \bibinfo{author}{S.-Y. Oh}, \bibinfo{author}{H.~Kim},
\newblock \bibinfo{title}{{DTA}: {Physical} {Camouflage} {Attacks} {Using} {Differentiable} {Transformation} {Network}},
\newblock \bibinfo{year}{2022}, pp. \bibinfo{pages}{15305--15314}. \URLprefix \url{https://openaccess.thecvf.com/content/CVPR2022/html/Suryanto_DTA_Physical_Camouflage_Attacks_Using_Differentiable_Transformation_Network_CVPR_2022_paper.html}.
%Type = Inproceedings
\bibitem[{Li et~al.(2023)Li, Li, Dai, Guo, and Xiao}]{li_physical-world_2023}
\bibinfo{author}{Y.~Li}, \bibinfo{author}{Y.~Li}, \bibinfo{author}{X.~Dai}, \bibinfo{author}{S.~Guo}, \bibinfo{author}{B.~Xiao},
\newblock \bibinfo{title}{Physical-{World} {Optical} {Adversarial} {Attacks} on {3D} {Face} {Recognition}},
\newblock \bibinfo{year}{2023}, pp. \bibinfo{pages}{24699--24708}. \URLprefix \url{https://openaccess.thecvf.com/content/CVPR2023/html/Li_Physical-World_Optical_Adversarial_Attacks_on_3D_Face_Recognition_CVPR_2023_paper.html}.
%Type = Inproceedings
\bibitem[{Hu et~al.(2023)Hu, Wang, Tiliwalidi, and Li}]{hu_adversarial_2023}
\bibinfo{author}{C.~Hu}, \bibinfo{author}{Y.~Wang}, \bibinfo{author}{K.~Tiliwalidi}, \bibinfo{author}{W.~Li},
\newblock \bibinfo{title}{Adversarial {Laser} {Spot}: {Robust} and {Covert} {Physical}-{World} {Attack} to {DNNs}},
\newblock in: \bibinfo{booktitle}{Proceedings of {The} 14th {Asian} {Conference} on {Machine} {Learning}}, \bibinfo{publisher}{PMLR}, \bibinfo{year}{2023}, pp. \bibinfo{pages}{483--498}. \URLprefix \url{https://proceedings.mlr.press/v189/hu23b.html}, \bibinfo{note}{iSSN: 2640-3498}.
%Type = Inproceedings
\bibitem[{Zheng et~al.(2024)Zheng, Lin, Sun, Zhao, Li, and Shen}]{zheng_physical_2024}
\bibinfo{author}{J.~Zheng}, \bibinfo{author}{C.~Lin}, \bibinfo{author}{J.~Sun}, \bibinfo{author}{Z.~Zhao}, \bibinfo{author}{Q.~Li}, \bibinfo{author}{C.~Shen},
\newblock \bibinfo{title}{Physical {3D} {Adversarial} {Attacks} against {Monocular} {Depth} {Estimation} in {Autonomous} {Driving}},
\newblock \bibinfo{year}{2024}, pp. \bibinfo{pages}{24452--24461}. \URLprefix \url{https://openaccess.thecvf.com/content/CVPR2024/html/Zheng_Physical_3D_Adversarial_Attacks_against_Monocular_Depth_Estimation_in_Autonomous_CVPR_2024_paper.html}.
%Type = Inproceedings
\bibitem[{Zhu et~al.(2024)Zhu, Zhang, Liu, Xu, and Liang}]{zhu_learning_2024}
\bibinfo{author}{R.~Zhu}, \bibinfo{author}{Z.~Zhang}, \bibinfo{author}{Z.~Liu}, \bibinfo{author}{C.~Xu}, \bibinfo{author}{S.~Liang},
\newblock \bibinfo{title}{Learning to {Transform} {Dynamically} for {Better} {Adversarial} {Transferability}},
\newblock \bibinfo{year}{2024}. \URLprefix \url{https://openreview.net/forum?id=k76ngWX9OR}.
%Type = Inproceedings
\bibitem[{Wang et~al.(2024)Wang, He, Wang, and Wang}]{wang_boosting_2024}
\bibinfo{author}{K.~Wang}, \bibinfo{author}{X.~He}, \bibinfo{author}{W.~Wang}, \bibinfo{author}{X.~Wang},
\newblock \bibinfo{title}{Boosting {Adversarial} {Transferability} by {Block} {Shuffle} and {Rotation}},
\newblock in: \bibinfo{booktitle}{2024 {IEEE}/{CVF} {Conference} on {Computer} {Vision} and {Pattern} {Recognition} ({CVPR})}, \bibinfo{year}{2024}, pp. \bibinfo{pages}{24336--24346}. \URLprefix \url{https://ieeexplore.ieee.org/abstract/document/10656871}. \DOIprefix\doi{10.1109/CVPR52733.2024.02297}, \bibinfo{note}{iSSN: 2575-7075}.
%Type = Inproceedings
\bibitem[{Ming et~al.(2024)Ming, Ren, Wang, and Feng}]{ming_boosting_2024}
\bibinfo{author}{D.~Ming}, \bibinfo{author}{P.~Ren}, \bibinfo{author}{Y.~Wang}, \bibinfo{author}{X.~Feng},
\newblock \bibinfo{title}{Boosting the {Transferability} of {Adversarial} {Attack} on {Vision} {Transformer} with {Adaptive} {Token} {Tuning}},
\newblock \bibinfo{year}{2024}. \URLprefix \url{https://openreview.net/forum?id=sNz7tptCH6}.
%Type = Misc
\bibitem[{Schwalbe et~al.(2022)Schwalbe, Wirth, and Schmid}]{schwalbe2022enabling}
\bibinfo{author}{G.~Schwalbe}, \bibinfo{author}{C.~Wirth}, \bibinfo{author}{U.~Schmid}, \bibinfo{title}{Enabling verification of deep neural networks in perception tasks using fuzzy logic and concept embeddings}, \bibinfo{year}{2022}. \DOIprefix\doi{10.48550/arXiv.2201.00572}. \href{http://arxiv.org/abs/2201.00572}{{\tt arXiv:2201.00572}}.
%Type = Inproceedings
\bibitem[{Giunchiglia et~al.(2022)Giunchiglia, Stoian, Khan, Cuzzolin, and Lukasiewicz}]{giunchiglia2022roadr}
\bibinfo{author}{E.~Giunchiglia}, \bibinfo{author}{M.~Stoian}, \bibinfo{author}{S.~Khan}, \bibinfo{author}{F.~Cuzzolin}, \bibinfo{author}{T.~Lukasiewicz},
\newblock \bibinfo{title}{{{ROAD-R}}: {{The Autonomous Driving Dataset}} with {{Logical Requirements}}},
\newblock in: \bibinfo{booktitle}{{{IJCLR}} 2022 {{Workshops}}}, \bibinfo{address}{Vienna, Austria}, \bibinfo{year}{2022}.
%Type = Inproceedings
\bibitem[{Ledaguenel et~al.(2024)Ledaguenel, Hudelot, and Khouadjia}]{ledaguenel2024improving}
\bibinfo{author}{A.~Ledaguenel}, \bibinfo{author}{C.~Hudelot}, \bibinfo{author}{M.~Khouadjia},
\newblock \bibinfo{title}{Improving {{Neural-based Classification}} with {{Logical Background Knowledge}}},
\newblock in: \bibinfo{booktitle}{{{ECAI}} 2024 {{Workshop Proceedings}}}, \bibinfo{publisher}{arXiv}, \bibinfo{address}{Santiago de Compostela, Spain}, \bibinfo{year}{2024}. \href{http://arxiv.org/abs/2402.13019}{{\tt arXiv:2402.13019}}.
%Type = Article
\bibitem[{Badreddine et~al.(2022)Badreddine, {d'Avila Garcez}, Serafini, and Spranger}]{badreddine2022logic}
\bibinfo{author}{S.~Badreddine}, \bibinfo{author}{A.~{d'Avila Garcez}}, \bibinfo{author}{L.~Serafini}, \bibinfo{author}{M.~Spranger},
\newblock \bibinfo{title}{Logic {{Tensor Networks}}},
\newblock \bibinfo{journal}{Artificial Intelligence} \bibinfo{volume}{303} (\bibinfo{year}{2022}) \bibinfo{pages}{103649}. \DOIprefix\doi{10.1016/j.artint.2021.103649}.
%Type = Inproceedings
\bibitem[{Bau et~al.(2017)Bau, Zhou, Khosla, Oliva, and Torralba}]{bau_network_2017}
\bibinfo{author}{D.~Bau}, \bibinfo{author}{B.~Zhou}, \bibinfo{author}{A.~Khosla}, \bibinfo{author}{A.~Oliva}, \bibinfo{author}{A.~Torralba},
\newblock \bibinfo{title}{Network {Dissection}: {Quantifying} {Interpretability} of {Deep} {Visual} {Representations}},
\newblock \bibinfo{year}{2017}, pp. \bibinfo{pages}{6541--6549}. \URLprefix \url{https://openaccess.thecvf.com/content_cvpr_2017/html/Bau_Network_Dissection_Quantifying_CVPR_2017_paper.html}.
%Type = Inproceedings
\bibitem[{Fong and Vedaldi(2018)}]{fong_net2vec_2018}
\bibinfo{author}{R.~Fong}, \bibinfo{author}{A.~Vedaldi},
\newblock \bibinfo{title}{{Net2Vec}: {Quantifying} and {Explaining} {How} {Concepts} {Are} {Encoded} by {Filters} in {Deep} {Neural} {Networks}},
\newblock \bibinfo{year}{2018}, pp. \bibinfo{pages}{8730--8738}. \URLprefix \url{https://openaccess.thecvf.com/content_cvpr_2018/html/Fong_Net2Vec_Quantifying_and_CVPR_2018_paper.html}.
%Type = Article
\bibitem[{Crabbé and van~der Schaar(2022)}]{crabbe_concept_2022}
\bibinfo{author}{J.~Crabbé}, \bibinfo{author}{M.~van~der Schaar},
\newblock \bibinfo{title}{Concept {Activation} {Regions}: {A} {Generalized} {Framework} {For} {Concept}-{Based} {Explanations}},
\newblock \bibinfo{journal}{Advances in Neural Information Processing Systems} \bibinfo{volume}{35} (\bibinfo{year}{2022}) \bibinfo{pages}{2590--2607}. \URLprefix \url{https://proceedings.neurips.cc/paper_files/paper/2022/hash/11a7f429d75f9f8c6e9c630aeb6524b5-Abstract-Conference.html}.
%Type = Inproceedings
\bibitem[{Oikarinen and Weng(2022)}]{oikarinen_clip-dissect_2022}
\bibinfo{author}{T.~Oikarinen}, \bibinfo{author}{T.-W. Weng},
\newblock \bibinfo{title}{{CLIP}-{Dissect}: {Automatic} {Description} of {Neuron} {Representations} in {Deep} {Vision} {Networks}},
\newblock \bibinfo{year}{2022}. \URLprefix \url{https://openreview.net/forum?id=iPWiwWHc1V}.
%Type = Article
\bibitem[{Khamaiseh et~al.(2022)Khamaiseh, Bagagem, {Al-Alaj}, Mancino, and Alomari}]{khamaiseh2022adversarial}
\bibinfo{author}{S.~Y. Khamaiseh}, \bibinfo{author}{D.~Bagagem}, \bibinfo{author}{A.~{Al-Alaj}}, \bibinfo{author}{M.~Mancino}, \bibinfo{author}{H.~W. Alomari},
\newblock \bibinfo{title}{Adversarial {{Deep Learning}}: {{A Survey}} on {{Adversarial Attacks}} and {{Defense Mechanisms}} on {{Image Classification}}},
\newblock \bibinfo{journal}{IEEE Access} \bibinfo{volume}{10} (\bibinfo{year}{2022}) \bibinfo{pages}{102266--102291}. \DOIprefix\doi{10.1109/ACCESS.2022.3208131}.
%Type = Misc
\bibitem[{Szegedy et~al.(2014)Szegedy, Zaremba, Sutskever, Bruna, Erhan, Goodfellow, and Fergus}]{szegedy_intriguing_2014}
\bibinfo{author}{C.~Szegedy}, \bibinfo{author}{W.~Zaremba}, \bibinfo{author}{I.~Sutskever}, \bibinfo{author}{J.~Bruna}, \bibinfo{author}{D.~Erhan}, \bibinfo{author}{I.~Goodfellow}, \bibinfo{author}{R.~Fergus}, \bibinfo{title}{Intriguing properties of neural networks}, \bibinfo{year}{2014}. \URLprefix \url{http://arxiv.org/abs/1312.6199}. \DOIprefix\doi{10.48550/arXiv.1312.6199}, \bibinfo{note}{arXiv:1312.6199 [cs]}.
%Type = Inproceedings
\bibitem[{Wu et~al.(2019)Wu, Wang, Xia, Bailey, and Ma}]{wu_skip_2019}
\bibinfo{author}{D.~Wu}, \bibinfo{author}{Y.~Wang}, \bibinfo{author}{S.-T. Xia}, \bibinfo{author}{J.~Bailey}, \bibinfo{author}{X.~Ma},
\newblock \bibinfo{title}{Skip {Connections} {Matter}: {On} the {Transferability} of {Adversarial} {Examples} {Generated} with {ResNets}},
\newblock \bibinfo{year}{2019}. \URLprefix \url{https://openreview.net/forum?id=BJlRs34Fvr}.
%Type = Article
\bibitem[{Su et~al.(2019)Su, Vargas, and Sakurai}]{su_one_2019}
\bibinfo{author}{J.~Su}, \bibinfo{author}{D.~V. Vargas}, \bibinfo{author}{K.~Sakurai},
\newblock \bibinfo{title}{One {Pixel} {Attack} for {Fooling} {Deep} {Neural} {Networks}},
\newblock \bibinfo{journal}{IEEE Transactions on Evolutionary Computation} \bibinfo{volume}{23} (\bibinfo{year}{2019}) \bibinfo{pages}{828--841}. \URLprefix \url{https://ieeexplore.ieee.org/document/8601309}. \DOIprefix\doi{10.1109/TEVC.2019.2890858}, \bibinfo{note}{conference Name: IEEE Transactions on Evolutionary Computation}.
%Type = Inproceedings
\bibitem[{Wang and He(2021)}]{wang_enhancing_2021}
\bibinfo{author}{X.~Wang}, \bibinfo{author}{K.~He},
\newblock \bibinfo{title}{Enhancing the {Transferability} of {Adversarial} {Attacks} {Through} {Variance} {Tuning}},
\newblock \bibinfo{year}{2021}, pp. \bibinfo{pages}{1924--1933}. \URLprefix \url{https://openaccess.thecvf.com/content/CVPR2021/html/Wang_Enhancing_the_Transferability_of_Adversarial_Attacks_Through_Variance_Tuning_CVPR_2021_paper.html}.
%Type = Inproceedings
\bibitem[{Eykholt et~al.(2018)Eykholt, Evtimov, Fernandes, Li, Rahmati, Xiao, Prakash, Kohno, and Song}]{eykholt_robust_2018}
\bibinfo{author}{K.~Eykholt}, \bibinfo{author}{I.~Evtimov}, \bibinfo{author}{E.~Fernandes}, \bibinfo{author}{B.~Li}, \bibinfo{author}{A.~Rahmati}, \bibinfo{author}{C.~Xiao}, \bibinfo{author}{A.~Prakash}, \bibinfo{author}{T.~Kohno}, \bibinfo{author}{D.~Song},
\newblock \bibinfo{title}{Robust {Physical}-{World} {Attacks} on {Deep} {Learning} {Visual} {Classification}},
\newblock \bibinfo{year}{2018}, pp. \bibinfo{pages}{1625--1634}. \URLprefix \url{https://openaccess.thecvf.com/content_cvpr_2018/html/Eykholt_Robust_Physical-World_Attacks_CVPR_2018_paper}.
%Type = Article
\bibitem[{Liu et~al.(2019)Liu, Liu, Fan, Ma, Zhang, Xie, and Tao}]{liu_perceptual-sensitive_2019}
\bibinfo{author}{A.~Liu}, \bibinfo{author}{X.~Liu}, \bibinfo{author}{J.~Fan}, \bibinfo{author}{Y.~Ma}, \bibinfo{author}{A.~Zhang}, \bibinfo{author}{H.~Xie}, \bibinfo{author}{D.~Tao},
\newblock \bibinfo{title}{Perceptual-{Sensitive} {GAN} for {Generating} {Adversarial} {Patches}},
\newblock \bibinfo{journal}{Proceedings of the AAAI Conference on Artificial Intelligence} \bibinfo{volume}{33} (\bibinfo{year}{2019}) \bibinfo{pages}{1028--1035}. \URLprefix \url{https://ojs.aaai.org/index.php/AAAI/article/view/3893}. \DOIprefix\doi{10.1609/aaai.v33i01.33011028}, \bibinfo{note}{number: 01}.
%Type = Inproceedings
\bibitem[{Huang et~al.(2023)Huang, Zhao, Jin, and Huang}]{huang_safari_2023}
\bibinfo{author}{W.~Huang}, \bibinfo{author}{X.~Zhao}, \bibinfo{author}{G.~Jin}, \bibinfo{author}{X.~Huang},
\newblock \bibinfo{title}{{SAFARI}: {Versatile} and {Efficient} {Evaluations} for {Robustness} of {Interpretability}},
\newblock \bibinfo{year}{2023}, pp. \bibinfo{pages}{1988--1998}. \URLprefix \url{https://openaccess.thecvf.com/content/ICCV2023/html/Huang_SAFARI_Versatile_and_Efficient_Evaluations_for_Robustness_of_Interpretability_ICCV_2023_paper.html}.
%Type = Inproceedings
\bibitem[{Wang et~al.(2023)Wang, Yao, Jiang, Li, and Chen}]{wang_rfla_2023}
\bibinfo{author}{D.~Wang}, \bibinfo{author}{W.~Yao}, \bibinfo{author}{T.~Jiang}, \bibinfo{author}{C.~Li}, \bibinfo{author}{X.~Chen},
\newblock \bibinfo{title}{{RFLA}: {A} {Stealthy} {Reflected} {Light} {Adversarial} {Attack} in the {Physical} {World}},
\newblock \bibinfo{year}{2023}, pp. \bibinfo{pages}{4455--4465}. \URLprefix \url{https://openaccess.thecvf.com/content/ICCV2023/html/Wang_RFLA_A_Stealthy_Reflected_Light_Adversarial_Attack_in_the_Physical_ICCV_2023_paper.html}.
%Type = Article
\bibitem[{Liu et~al.(2021)Liu, Arnon, Lazarus, Strong, Barrett, and Kochenderfer}]{liu2021algorithms}
\bibinfo{author}{C.~Liu}, \bibinfo{author}{T.~Arnon}, \bibinfo{author}{C.~Lazarus}, \bibinfo{author}{C.~Strong}, \bibinfo{author}{C.~Barrett}, \bibinfo{author}{M.~J. Kochenderfer},
\newblock \bibinfo{title}{Algorithms for verifying deep neural networks},
\newblock \bibinfo{journal}{Foundations and Trends{\textregistered} in Optimization} \bibinfo{volume}{4} (\bibinfo{year}{2021}) \bibinfo{pages}{244--404}. \DOIprefix\doi{10.1561/2400000035}. \href{http://arxiv.org/abs/1903.06758}{{\tt arXiv:1903.06758}}.
%Type = Inproceedings
\bibitem[{Sun et~al.(2018)Sun, Wu, Ruan, Huang, Kwiatkowska, and Kroening}]{sun2018concolic}
\bibinfo{author}{Y.~Sun}, \bibinfo{author}{M.~Wu}, \bibinfo{author}{W.~Ruan}, \bibinfo{author}{X.~Huang}, \bibinfo{author}{M.~Kwiatkowska}, \bibinfo{author}{D.~Kroening},
\newblock \bibinfo{title}{Concolic testing for deep neural networks},
\newblock in: \bibinfo{booktitle}{Proc. 33rd {{ACM}}/{{IEEE Int}}. {{Conf}}. {{Automated Software Engineering}}}, \bibinfo{publisher}{ACM}, \bibinfo{address}{Montpellier, France}, \bibinfo{year}{2018}, pp. \bibinfo{pages}{109--119}. \DOIprefix\doi{10.1145/3238147.3238172}.
%Type = Inproceedings
\bibitem[{Fremont et~al.(2019)Fremont, Dreossi, Ghosh, Yue, Sangiovanni-Vincentelli, and Seshia}]{fremont_scenic_2019}
\bibinfo{author}{D.~J. Fremont}, \bibinfo{author}{T.~Dreossi}, \bibinfo{author}{S.~Ghosh}, \bibinfo{author}{X.~Yue}, \bibinfo{author}{A.~L. Sangiovanni-Vincentelli}, \bibinfo{author}{S.~A. Seshia},
\newblock \bibinfo{title}{Scenic: a language for scenario specification and scene generation},
\newblock in: \bibinfo{booktitle}{Proceedings of the 40th {ACM} {SIGPLAN} {Conference} on {Programming} {Language} {Design} and {Implementation}}, {PLDI} 2019, \bibinfo{publisher}{Association for Computing Machinery}, \bibinfo{address}{New York, NY, USA}, \bibinfo{year}{2019}, pp. \bibinfo{pages}{63--78}. \URLprefix \url{https://dl.acm.org/doi/10.1145/3314221.3314633}. \DOIprefix\doi{10.1145/3314221.3314633}.
%Type = Inproceedings
\bibitem[{Dreossi et~al.(2019)Dreossi, Fremont, Ghosh, Kim, Ravanbakhsh, Vazquez-Chanlatte, and Seshia}]{dreossi_verifai_2019}
\bibinfo{author}{T.~Dreossi}, \bibinfo{author}{D.~J. Fremont}, \bibinfo{author}{S.~Ghosh}, \bibinfo{author}{E.~Kim}, \bibinfo{author}{H.~Ravanbakhsh}, \bibinfo{author}{M.~Vazquez-Chanlatte}, \bibinfo{author}{S.~A. Seshia},
\newblock \bibinfo{title}{{VerifAI}: {A} {Toolkit} for the {Formal} {Design} and {Analysis} of {Artificial} {Intelligence}-{Based} {Systems}},
\newblock in: \bibinfo{editor}{I.~Dillig}, \bibinfo{editor}{S.~Tasiran} (Eds.), \bibinfo{booktitle}{Computer {Aided} {Verification}}, \bibinfo{publisher}{Springer International Publishing}, \bibinfo{address}{Cham}, \bibinfo{year}{2019}, pp. \bibinfo{pages}{432--442}. \DOIprefix\doi{10.1007/978-3-030-25540-4_25}.
%Type = Inproceedings
\bibitem[{Lee et~al.(2025)Lee, Mikriukov, Schwalbe, Wermter, and Wolter}]{lee2025conceptbased}
\bibinfo{author}{J.~H. Lee}, \bibinfo{author}{G.~Mikriukov}, \bibinfo{author}{G.~Schwalbe}, \bibinfo{author}{S.~Wermter}, \bibinfo{author}{D.~Wolter},
\newblock \bibinfo{title}{Concept-{{Based Explanations}} in~{{Computer Vision}}: {{Where Are We}} and~{{Where Could We Go}}?},
\newblock in: \bibinfo{editor}{A.~Del~Bue}, \bibinfo{editor}{C.~Canton}, \bibinfo{editor}{J.~{Pont-Tuset}}, \bibinfo{editor}{T.~Tommasi} (Eds.), \bibinfo{booktitle}{Computer {{Vision}} -- {{ECCV}} 2024 {{Workshops}}}, \bibinfo{publisher}{Springer Nature Switzerland}, \bibinfo{address}{Cham}, \bibinfo{year}{2025}, pp. \bibinfo{pages}{266--287}. \DOIprefix\doi{10.1007/978-3-031-92648-8_17}.
%Type = Misc
\bibitem[{Poeta et~al.(2023)Poeta, Ciravegna, Pastor, Cerquitelli, and Baralis}]{poeta2023conceptbased}
\bibinfo{author}{E.~Poeta}, \bibinfo{author}{G.~Ciravegna}, \bibinfo{author}{E.~Pastor}, \bibinfo{author}{T.~Cerquitelli}, \bibinfo{author}{E.~Baralis}, \bibinfo{title}{Concept-based {{Explainable Artificial Intelligence}}: {{A Survey}}}, \bibinfo{year}{2023}. \DOIprefix\doi{10.48550/arXiv.2312.12936}. \href{http://arxiv.org/abs/2312.12936}{{\tt arXiv:2312.12936}}.
%Type = Misc
\bibitem[{Schwalbe(2022)}]{schwalbe2022concept}
\bibinfo{author}{G.~Schwalbe}, \bibinfo{title}{Concept {{Embedding Analysis}}: {{A Review}}}, \bibinfo{year}{2022}. \DOIprefix\doi{10.48550/arXiv.2203.13909}. \href{http://arxiv.org/abs/2203.13909}{{\tt arXiv:2203.13909}}.
%Type = Inproceedings
\bibitem[{Wan et~al.(2020)Wan, Dunlap, Ho, Yin, Lee, Petryk, Bargal, and Gonzalez}]{wan_nbdt_2020}
\bibinfo{author}{A.~Wan}, \bibinfo{author}{L.~Dunlap}, \bibinfo{author}{D.~Ho}, \bibinfo{author}{J.~Yin}, \bibinfo{author}{S.~Lee}, \bibinfo{author}{S.~Petryk}, \bibinfo{author}{S.~A. Bargal}, \bibinfo{author}{J.~E. Gonzalez},
\newblock \bibinfo{title}{{NBDT}: {Neural}-{Backed} {Decision} {Tree}},
\newblock \bibinfo{year}{2020}. \URLprefix \url{https://openreview.net/forum?id=mCLVeEpplNE}.
%Type = Incollection
\bibitem[{Schwalbe(2021)}]{maglogiannis_verification_2021}
\bibinfo{author}{G.~Schwalbe},
\newblock \bibinfo{title}{Verification of {Size} {Invariance} in {DNN} {Activations} {Using} {Concept} {Embeddings}},
\newblock in: \bibinfo{editor}{I.~Maglogiannis}, \bibinfo{editor}{J.~Macintyre}, \bibinfo{editor}{L.~Iliadis} (Eds.), \bibinfo{booktitle}{Artificial {Intelligence} {Applications} and {Innovations}}, volume \bibinfo{volume}{627}, \bibinfo{publisher}{Springer International Publishing}, \bibinfo{address}{Cham}, \bibinfo{year}{2021}, pp. \bibinfo{pages}{374--386}. \URLprefix \url{https://link.springer.com/10.1007/978-3-030-79150-6_30}. \DOIprefix\doi{10.1007/978-3-030-79150-6_30}.
%Type = Inproceedings
\bibitem[{Koh et~al.(2020)Koh, Nguyen, Tang, Mussmann, Pierson, Kim, and Liang}]{koh_concept_2020}
\bibinfo{author}{P.~W. Koh}, \bibinfo{author}{T.~Nguyen}, \bibinfo{author}{Y.~S. Tang}, \bibinfo{author}{S.~Mussmann}, \bibinfo{author}{E.~Pierson}, \bibinfo{author}{B.~Kim}, \bibinfo{author}{P.~Liang},
\newblock \bibinfo{title}{Concept {Bottleneck} {Models}},
\newblock in: \bibinfo{booktitle}{Proceedings of the 37th {International} {Conference} on {Machine} {Learning}}, \bibinfo{publisher}{PMLR}, \bibinfo{year}{2020}, pp. \bibinfo{pages}{5338--5348}. \URLprefix \url{https://proceedings.mlr.press/v119/koh20a.html}.
%Type = Inproceedings
\bibitem[{Yuksekgonul et~al.(2022)Yuksekgonul, Wang, and Zou}]{yuksekgonul_post-hoc_2022}
\bibinfo{author}{M.~Yuksekgonul}, \bibinfo{author}{M.~Wang}, \bibinfo{author}{J.~Zou},
\newblock \bibinfo{title}{Post-hoc {Concept} {Bottleneck} {Models}},
\newblock \bibinfo{year}{2022}. \URLprefix \url{https://openreview.net/forum?id=nA5AZ8CEyow}.
%Type = Inproceedings
\bibitem[{Oikarinen et~al.(2022)Oikarinen, Das, Nguyen, and Weng}]{oikarinen_label-free_2022}
\bibinfo{author}{T.~Oikarinen}, \bibinfo{author}{S.~Das}, \bibinfo{author}{L.~M. Nguyen}, \bibinfo{author}{T.-W. Weng},
\newblock \bibinfo{title}{Label-free {Concept} {Bottleneck} {Models}},
\newblock \bibinfo{year}{2022}. \URLprefix \url{https://openreview.net/forum?id=FlCg47MNvBA}.
%Type = Inproceedings
\bibitem[{Yang et~al.(2023)Yang, Panagopoulou, Zhou, Jin, Callison-Burch, and Yatskar}]{yang_language_2023}
\bibinfo{author}{Y.~Yang}, \bibinfo{author}{A.~Panagopoulou}, \bibinfo{author}{S.~Zhou}, \bibinfo{author}{D.~Jin}, \bibinfo{author}{C.~Callison-Burch}, \bibinfo{author}{M.~Yatskar},
\newblock \bibinfo{title}{Language in a {Bottle}: {Language} {Model} {Guided} {Concept} {Bottlenecks} for {Interpretable} {Image} {Classification}},
\newblock \bibinfo{year}{2023}, pp. \bibinfo{pages}{19187--19197}. \URLprefix \url{https://openaccess.thecvf.com/content/CVPR2023/html/Yang_Language_in_a_Bottle_Language_Model_Guided_Concept_Bottlenecks_for_CVPR_2023_paper.html}.
%Type = Inproceedings
\bibitem[{Kim et~al.(2018)Kim, Wattenberg, Gilmer, Cai, Wexler, Viegas, and Sayres}]{kim2018interpretability}
\bibinfo{author}{B.~Kim}, \bibinfo{author}{M.~Wattenberg}, \bibinfo{author}{J.~Gilmer}, \bibinfo{author}{C.~Cai}, \bibinfo{author}{J.~Wexler}, \bibinfo{author}{F.~Viegas}, \bibinfo{author}{R.~Sayres},
\newblock \bibinfo{title}{Interpretability beyond feature attribution: {{Quantitative}} testing with concept activation vectors ({{TCAV}})},
\newblock in: \bibinfo{booktitle}{Proc. 35th {{Int}}. {{Conf}}. {{Machine Learning}}}, volume~\bibinfo{volume}{80} of \textit{\bibinfo{series}{Proceedings of {{Machine Learning Research}}}}, \bibinfo{publisher}{PMLR}, \bibinfo{address}{Stockholmsm{\"a}ssan, Stockholm, Sweden}, \bibinfo{year}{2018}, pp. \bibinfo{pages}{2668--2677}.
%Type = Inproceedings
\bibitem[{Bau et~al.(2017)Bau, Zhou, Khosla, Oliva, and Torralba}]{bau2017network}
\bibinfo{author}{D.~Bau}, \bibinfo{author}{B.~Zhou}, \bibinfo{author}{A.~Khosla}, \bibinfo{author}{A.~Oliva}, \bibinfo{author}{A.~Torralba},
\newblock \bibinfo{title}{Network dissection: {{Quantifying}} interpretability of deep visual representations},
\newblock in: \bibinfo{booktitle}{Proc. 2017 {{IEEE Conf}}. {{Comput}}. {{Vision}} and {{Pattern Recognition}}}, \bibinfo{publisher}{IEEE Computer Society}, \bibinfo{address}{Honolulu, HI, USA}, \bibinfo{year}{2017}, pp. \bibinfo{pages}{3319--3327}. \DOIprefix\doi{10.1109/CVPR.2017.354}. \href{http://arxiv.org/abs/1704.05796}{{\tt arXiv:1704.05796}}.
%Type = Article
\bibitem[{Olah et~al.(2017)Olah, Mordvintsev, and Schubert}]{olah2017feature}
\bibinfo{author}{C.~Olah}, \bibinfo{author}{A.~Mordvintsev}, \bibinfo{author}{L.~Schubert},
\newblock \bibinfo{title}{Feature visualization},
\newblock \bibinfo{journal}{Distill} \bibinfo{volume}{2} (\bibinfo{year}{2017}) \bibinfo{pages}{e7}. \DOIprefix\doi{10.23915/distill.00007}.
%Type = Article
\bibitem[{Crabb{\'e} and {van der Schaar}(2022)}]{crabbe2022concept}
\bibinfo{author}{J.~Crabb{\'e}}, \bibinfo{author}{M.~{van der Schaar}},
\newblock \bibinfo{title}{Concept {{Activation Regions}}: {{A Generalized Framework For Concept-Based Explanations}}},
\newblock \bibinfo{journal}{Advances in Neural Information Processing Systems} \bibinfo{volume}{35} (\bibinfo{year}{2022}) \bibinfo{pages}{2590--2607}.
%Type = Inproceedings
\bibitem[{Zhang et~al.(2021)Zhang, Madumal, Miller, Ehinger, and Rubinstein}]{zhang2021invertible}
\bibinfo{author}{R.~Zhang}, \bibinfo{author}{P.~Madumal}, \bibinfo{author}{T.~Miller}, \bibinfo{author}{K.~A. Ehinger}, \bibinfo{author}{B.~I.~P. Rubinstein},
\newblock \bibinfo{title}{Invertible concept-based explanations for {{CNN}} models with non-negative concept activation vectors},
\newblock in: \bibinfo{booktitle}{Proc. 35th {{AAAI Conf}}. {{Artificial Intelligence}}}, volume~\bibinfo{volume}{35}, \bibinfo{publisher}{AAAI Press}, \bibinfo{address}{virtual}, \bibinfo{year}{2021}, pp. \bibinfo{pages}{11682--11690}.
%Type = Inproceedings
\bibitem[{Fong and Vedaldi(2018)}]{fong2018net2vec}
\bibinfo{author}{R.~Fong}, \bibinfo{author}{A.~Vedaldi},
\newblock \bibinfo{title}{{{Net2Vec}}: {{Quantifying}} and explaining how concepts are encoded by filters in deep neural networks},
\newblock in: \bibinfo{booktitle}{Proc. 2018 {{IEEE Conf}}. {{Comput}}. {{Vision}} and {{Pattern Recognition}}}, \bibinfo{publisher}{IEEE Computer Society}, \bibinfo{address}{Salt Lake City, UT, USA}, \bibinfo{year}{2018}, pp. \bibinfo{pages}{8730--8738}. \DOIprefix\doi{10.1109/CVPR.2018.00910}.
%Type = Inproceedings
\bibitem[{Graziani et~al.(2018)Graziani, Andrearczyk, and M{\"u}ller}]{graziani2018regression}
\bibinfo{author}{M.~Graziani}, \bibinfo{author}{V.~Andrearczyk}, \bibinfo{author}{H.~M{\"u}ller},
\newblock \bibinfo{title}{Regression concept vectors for bidirectional explanations in histopathology},
\newblock in: \bibinfo{editor}{D.~Stoyanov}, \bibinfo{editor}{Z.~Taylor}, \bibinfo{editor}{S.~M. Kia}, \bibinfo{editor}{I.~Oguz}, \bibinfo{editor}{M.~Reyes}, \bibinfo{editor}{A.~Martel}, \bibinfo{editor}{L.~{Maier-Hein}}, \bibinfo{editor}{A.~F. Marquand}, \bibinfo{editor}{E.~Duchesnay}, \bibinfo{editor}{T.~L{\"o}fstedt}, \bibinfo{editor}{B.~Landman}, \bibinfo{editor}{M.~J. Cardoso}, \bibinfo{editor}{C.~A. Silva}, \bibinfo{editor}{S.~Pereira}, \bibinfo{editor}{R.~Meier} (Eds.), \bibinfo{booktitle}{Understanding and {{Interpreting Machine Learning}} in {{Medical Image Computing Applications}}}, Lecture {{Notes}} in {{Computer Science}}, \bibinfo{publisher}{Springer International Publishing}, \bibinfo{address}{Cham}, \bibinfo{year}{2018}, pp. \bibinfo{pages}{124--132}. \DOIprefix\doi{10.1007/978-3-030-02628-8_14}.
%Type = Article
\bibitem[{Mikriukov et~al.(2025)Mikriukov, Schwalbe, and Bade}]{mikriukov2025local}
\bibinfo{author}{G.~Mikriukov}, \bibinfo{author}{G.~Schwalbe}, \bibinfo{author}{K.~Bade},
\newblock \bibinfo{title}{Local {{Concept Embeddings}} for {{Analysis}} of {{Concept Distributions}} in {{Vision DNN Feature Spaces}}},
\newblock \bibinfo{journal}{International Journal of Computer Vision}  (\bibinfo{year}{2025}). \DOIprefix\doi{10.1007/s11263-025-02446-y}.
%Type = Inproceedings
\bibitem[{Dreyer et~al.(2024)Dreyer, Purelku, Vielhaben, Samek, and Lapuschkin}]{dreyer2024pure}
\bibinfo{author}{M.~Dreyer}, \bibinfo{author}{E.~Purelku}, \bibinfo{author}{J.~Vielhaben}, \bibinfo{author}{W.~Samek}, \bibinfo{author}{S.~Lapuschkin},
\newblock \bibinfo{title}{{{PURE}}: {{Turning Polysemantic Neurons Into Pure Features}} by {{Identifying Relevant Circuits}}},
\newblock in: \bibinfo{booktitle}{{{CVPR2024 Workshops}}, {{XAI4CV}}}, \bibinfo{publisher}{arXiv}, \bibinfo{address}{Seattle Convention Center, Seattle, WA, USA}, \bibinfo{year}{2024}. \DOIprefix\doi{10.48550/arXiv.2404.06453}. \href{http://arxiv.org/abs/2404.06453}{{\tt arXiv:2404.06453}}.
%Type = Inproceedings
\bibitem[{Slack et~al.(2020)Slack, Hilgard, Jia, Singh, and Lakkaraju}]{slack_fooling_2020}
\bibinfo{author}{D.~Slack}, \bibinfo{author}{S.~Hilgard}, \bibinfo{author}{E.~Jia}, \bibinfo{author}{S.~Singh}, \bibinfo{author}{H.~Lakkaraju},
\newblock \bibinfo{title}{Fooling {LIME} and {SHAP}: {Adversarial} {Attacks} on {Post} hoc {Explanation} {Methods}},
\newblock in: \bibinfo{booktitle}{Proceedings of the {AAAI}/{ACM} {Conference} on {AI}, {Ethics}, and {Society}}, {AIES} '20, \bibinfo{publisher}{Association for Computing Machinery}, \bibinfo{address}{New York, NY, USA}, \bibinfo{year}{2020}, pp. \bibinfo{pages}{180--186}. \URLprefix \url{https://dl.acm.org/doi/10.1145/3375627.3375830}. \DOIprefix\doi{10.1145/3375627.3375830}.
%Type = Inproceedings
\bibitem[{Mikriukov et~al.(2024)Mikriukov, Schwalbe, Motzkus, and Bade}]{mikriukov_unveiling_2024}
\bibinfo{author}{G.~Mikriukov}, \bibinfo{author}{G.~Schwalbe}, \bibinfo{author}{F.~Motzkus}, \bibinfo{author}{K.~Bade},
\newblock \bibinfo{title}{Unveiling the {Anatomy} of {Adversarial} {Attacks}: {Concept}-{Based} {XAI} {Dissection} of {CNNs}},
\newblock in: \bibinfo{editor}{L.~Longo}, \bibinfo{editor}{S.~Lapuschkin}, \bibinfo{editor}{C.~Seifert} (Eds.), \bibinfo{booktitle}{Explainable {Artificial} {Intelligence}}, \bibinfo{publisher}{Springer Nature Switzerland}, \bibinfo{address}{Cham}, \bibinfo{year}{2024}, pp. \bibinfo{pages}{92--116}. \DOIprefix\doi{10.1007/978-3-031-63787-2_6}.
%Type = Inproceedings
\bibitem[{Brown and Kvinge(2023)}]{brown_making_2023}
\bibinfo{author}{D.~Brown}, \bibinfo{author}{H.~Kvinge},
\newblock \bibinfo{title}{Making {Corgis} {Important} for {Honeycomb} {Classification}: {Adversarial} {Attacks} on {Concept}-{Based} {Explainability} {Tools}},
\newblock \bibinfo{year}{2023}, pp. \bibinfo{pages}{620--627}. \URLprefix \url{https://openaccess.thecvf.com/content/CVPR2023W/TAG-PRA/html/Brown_Making_Corgis_Important_for_Honeycomb_Classification_Adversarial_Attacks_on_Concept-Based_CVPRW_2023_paper.html}.
%Type = Article
\bibitem[{Ghorbani et~al.(2019)Ghorbani, Abid, and Zou}]{ghorbani_interpretation_2019}
\bibinfo{author}{A.~Ghorbani}, \bibinfo{author}{A.~Abid}, \bibinfo{author}{J.~Zou},
\newblock \bibinfo{title}{Interpretation of {Neural} {Networks} {Is} {Fragile}},
\newblock \bibinfo{journal}{Proceedings of the AAAI Conference on Artificial Intelligence} \bibinfo{volume}{33} (\bibinfo{year}{2019}) \bibinfo{pages}{3681--3688}. \URLprefix \url{https://ojs.aaai.org/index.php/AAAI/article/view/4252}. \DOIprefix\doi{10.1609/aaai.v33i01.33013681}.
%Type = Inproceedings
\bibitem[{Mangal et~al.(2024)Mangal, Narodytska, Gopinath, Hu, Roy, Jha, and Păsăreanu}]{mangal_concept-based_2024}
\bibinfo{author}{R.~Mangal}, \bibinfo{author}{N.~Narodytska}, \bibinfo{author}{D.~Gopinath}, \bibinfo{author}{B.~C. Hu}, \bibinfo{author}{A.~Roy}, \bibinfo{author}{S.~Jha}, \bibinfo{author}{C.~S. Păsăreanu},
\newblock \bibinfo{title}{Concept-{Based} {Analysis} of {Neural} {Networks} via {Vision}-{Language} {Models}},
\newblock in: \bibinfo{editor}{G.~Avni}, \bibinfo{editor}{M.~Giacobbe}, \bibinfo{editor}{T.~T. Johnson}, \bibinfo{editor}{G.~Katz}, \bibinfo{editor}{A.~Lukina}, \bibinfo{editor}{N.~Narodytska}, \bibinfo{editor}{C.~Schilling} (Eds.), \bibinfo{booktitle}{{AI} {Verification}}, \bibinfo{publisher}{Springer Nature Switzerland}, \bibinfo{address}{Cham}, \bibinfo{year}{2024}, pp. \bibinfo{pages}{49--77}. \DOIprefix\doi{10.1007/978-3-031-65112-0_3}.
%Type = Inproceedings
\bibitem[{Radford et~al.(2021)Radford, Kim, Hallacy, Ramesh, Goh, Agarwal, Sastry, Askell, Mishkin, Clark, Krueger, and Sutskever}]{radford_learning_2021}
\bibinfo{author}{A.~Radford}, \bibinfo{author}{J.~W. Kim}, \bibinfo{author}{C.~Hallacy}, \bibinfo{author}{A.~Ramesh}, \bibinfo{author}{G.~Goh}, \bibinfo{author}{S.~Agarwal}, \bibinfo{author}{G.~Sastry}, \bibinfo{author}{A.~Askell}, \bibinfo{author}{P.~Mishkin}, \bibinfo{author}{J.~Clark}, \bibinfo{author}{G.~Krueger}, \bibinfo{author}{I.~Sutskever},
\newblock \bibinfo{title}{Learning {Transferable} {Visual} {Models} {From} {Natural} {Language} {Supervision}},
\newblock in: \bibinfo{booktitle}{Proceedings of the 38th {International} {Conference} on {Machine} {Learning}}, \bibinfo{publisher}{PMLR}, \bibinfo{year}{2021}, pp. \bibinfo{pages}{8748--8763}. \URLprefix \url{https://proceedings.mlr.press/v139/radford21a.html}, \bibinfo{note}{iSSN: 2640-3498}.
%Type = Inproceedings
\bibitem[{Cheng et~al.(2020)Cheng, Huang, Brunner, and Hashemi}]{cheng_towards_2020}
\bibinfo{author}{C.-H. Cheng}, \bibinfo{author}{C.-H. Huang}, \bibinfo{author}{T.~Brunner}, \bibinfo{author}{V.~Hashemi},
\newblock \bibinfo{title}{Towards {Safety} {Verification} of {Direct} {Perception} {Neural} {Networks}},
\newblock in: \bibinfo{booktitle}{2020 {Design}, {Automation} \& {Test} in {Europe} {Conference} \& {Exhibition} ({DATE})}, \bibinfo{year}{2020}, pp. \bibinfo{pages}{1640--1643}. \URLprefix \url{https://ieeexplore.ieee.org/abstract/document/9116205}. \DOIprefix\doi{10.23919/DATE48585.2020.9116205}, \bibinfo{note}{iSSN: 1558-1101}.
%Type = Inproceedings
\bibitem[{Xu et~al.(2018)Xu, Zhang, Friedman, Liang, and Broeck}]{xu2018semantic}
\bibinfo{author}{J.~Xu}, \bibinfo{author}{Z.~Zhang}, \bibinfo{author}{T.~Friedman}, \bibinfo{author}{Y.~Liang}, \bibinfo{author}{G.~Broeck},
\newblock \bibinfo{title}{A semantic loss function for deep learning with symbolic knowledge},
\newblock in: \bibinfo{booktitle}{Proceedings of the 35th {{International Conference}} on {{Machine Learning}}}, \bibinfo{publisher}{PMLR}, \bibinfo{year}{2018}, pp. \bibinfo{pages}{5502--5511}.
%Type = Inproceedings
\bibitem[{Xie et~al.(2022)Xie, Kersting, and Neider}]{xie2022neurosymbolic}
\bibinfo{author}{X.~Xie}, \bibinfo{author}{K.~Kersting}, \bibinfo{author}{D.~Neider},
\newblock \bibinfo{title}{Neuro-symbolic verification of deep neural networks},
\newblock in: \bibinfo{booktitle}{Thirty-{{First International Joint Conference}} on {{Artificial Intelligence}}}, volume~\bibinfo{volume}{4}, \bibinfo{year}{2022}, pp. \bibinfo{pages}{3622--3628}. \DOIprefix\doi{10.24963/ijcai.2022/503}.
%Type = Book
\bibitem[{Hájek(1998)}]{hajek_metamathematics_1998}
\bibinfo{author}{P.~Hájek}, \bibinfo{title}{Metamathematics of {Fuzzy} {Logic}}, Trends in {Logic}, \bibinfo{publisher}{Springer Netherlands}, \bibinfo{address}{Dordrecht}, \bibinfo{year}{1998}. \URLprefix \url{http://link.springer.com/10.1007/978-94-011-5300-3}. \DOIprefix\doi{10.1007/978-94-011-5300-3}, \bibinfo{note}{iSSN: 1572-6126}.
%Type = Inproceedings
\bibitem[{Tramèr et~al.(2018)Tramèr, Kurakin, Papernot, Goodfellow, Boneh, and McDaniel}]{tramer_ensemble_2018}
\bibinfo{author}{F.~Tramèr}, \bibinfo{author}{A.~Kurakin}, \bibinfo{author}{N.~Papernot}, \bibinfo{author}{I.~Goodfellow}, \bibinfo{author}{D.~Boneh}, \bibinfo{author}{P.~McDaniel},
\newblock \bibinfo{title}{Ensemble {Adversarial} {Training}: {Attacks} and {Defenses}},
\newblock \bibinfo{year}{2018}. \URLprefix \url{https://openreview.net/forum?id=rkZvSe-RZ}.
%Type = Article
\bibitem[{Deng(2012)}]{deng_mnist_2012}
\bibinfo{author}{L.~Deng},
\newblock \bibinfo{title}{The {MNIST} {Database} of {Handwritten} {Digit} {Images} for {Machine} {Learning} {Research} [{Best} of the {Web}]},
\newblock \bibinfo{journal}{IEEE Signal Processing Magazine} \bibinfo{volume}{29} (\bibinfo{year}{2012}) \bibinfo{pages}{141--142}. \URLprefix \url{https://ieeexplore.ieee.org/document/6296535}. \DOIprefix\doi{10.1109/MSP.2012.2211477}.
%Type = Article
\bibitem[{Stallkamp et~al.(2012)Stallkamp, Schlipsing, Salmen, and Igel}]{Stallkamp2012}
\bibinfo{author}{J.~Stallkamp}, \bibinfo{author}{M.~Schlipsing}, \bibinfo{author}{J.~Salmen}, \bibinfo{author}{C.~Igel},
\newblock \bibinfo{title}{Man vs. computer: {Benchmarking} machine learning algorithms for traffic sign recognition},
\newblock \bibinfo{journal}{Neural Networks} \bibinfo{volume}{32} (\bibinfo{year}{2012}) \bibinfo{pages}{323--332}. \URLprefix \url{https://www.sciencedirect.com/science/article/pii/S0893608012000457}. \DOIprefix\doi{10.1016/j.neunet.2012.02.016}.
%Type = Article
\bibitem[{Russakovsky et~al.(2015)Russakovsky, Deng, Su, Krause, Satheesh, Ma, Huang, Karpathy, Khosla, Bernstein, Berg, and Fei-Fei}]{russakovsky_imagenet_2015}
\bibinfo{author}{O.~Russakovsky}, \bibinfo{author}{J.~Deng}, \bibinfo{author}{H.~Su}, \bibinfo{author}{J.~Krause}, \bibinfo{author}{S.~Satheesh}, \bibinfo{author}{S.~Ma}, \bibinfo{author}{Z.~Huang}, \bibinfo{author}{A.~Karpathy}, \bibinfo{author}{A.~Khosla}, \bibinfo{author}{M.~Bernstein}, \bibinfo{author}{A.~C. Berg}, \bibinfo{author}{L.~Fei-Fei},
\newblock \bibinfo{title}{{ImageNet} {Large} {Scale} {Visual} {Recognition} {Challenge}},
\newblock \bibinfo{journal}{International Journal of Computer Vision} \bibinfo{volume}{115} (\bibinfo{year}{2015}) \bibinfo{pages}{211--252}. \URLprefix \url{https://doi.org/10.1007/s11263-015-0816-y}. \DOIprefix\doi{10.1007/s11263-015-0816-y}.
%Type = Inproceedings
\bibitem[{Szegedy et~al.(2016)Szegedy, Vanhoucke, Ioffe, Shlens, and Wojna}]{szegedy_rethinking_2016}
\bibinfo{author}{C.~Szegedy}, \bibinfo{author}{V.~Vanhoucke}, \bibinfo{author}{S.~Ioffe}, \bibinfo{author}{J.~Shlens}, \bibinfo{author}{Z.~Wojna},
\newblock \bibinfo{title}{Rethinking the {Inception} {Architecture} for {Computer} {Vision}},
\newblock \bibinfo{year}{2016}, pp. \bibinfo{pages}{2818--2826}. \URLprefix \url{https://www.cv-foundation.org/openaccess/content_cvpr_2016/html/Szegedy_Rethinking_the_Inception_CVPR_2016_paper.html}.
%Type = Article
\bibitem[{Szegedy et~al.(2017)Szegedy, Ioffe, Vanhoucke, and Alemi}]{szegedy_inception-v4_2017}
\bibinfo{author}{C.~Szegedy}, \bibinfo{author}{S.~Ioffe}, \bibinfo{author}{V.~Vanhoucke}, \bibinfo{author}{A.~Alemi},
\newblock \bibinfo{title}{Inception-v4, {Inception}-{ResNet} and the {Impact} of {Residual} {Connections} on {Learning}},
\newblock \bibinfo{journal}{Proceedings of the AAAI Conference on Artificial Intelligence} \bibinfo{volume}{31} (\bibinfo{year}{2017}). \URLprefix \url{https://ojs.aaai.org/index.php/AAAI/article/view/11231}. \DOIprefix\doi{10.1609/aaai.v31i1.11231}, \bibinfo{note}{number: 1}.
%Type = Inproceedings
\bibitem[{He et~al.(2016)He, Zhang, Ren, and Sun}]{he_deep_2016}
\bibinfo{author}{K.~He}, \bibinfo{author}{X.~Zhang}, \bibinfo{author}{S.~Ren}, \bibinfo{author}{J.~Sun},
\newblock \bibinfo{title}{Deep {Residual} {Learning} for {Image} {Recognition}},
\newblock \bibinfo{year}{2016}, pp. \bibinfo{pages}{770--778}. \URLprefix \url{https://openaccess.thecvf.com/content_cvpr_2016/html/He_Deep_Residual_Learning_CVPR_2016_paper.html}.

\end{thebibliography}

%%
%% If your work has an appendix, this is the place to put it.
\newpage
\appendix

\section{Considerations for Reachability-based Search}
\label{app:reachability-search-consideration}
Existing reachability-based techniques conduct forward and/or backward passes through the NN to trace / estimate regions of interest through the NN processing.
We here show how the considered concept-based properties give rise to a particularly efficient formulation of this approach:
Being half-spaces in intermediate layers, the (negated) concepts have the potential to easily and substantially reduce the adversarial space that one needs to keep track off half-way through the network and can also be easily described in later layers as sketched in \autoref{fig:concept-propagation}.
In the following, this is illustrated for a back-propagation approach for a simple generalized untargeted attack $(c_1\wedge \dots \wedge c_n)\implies l$.
Recall that a valid counterexample falsifying the property $(c_1\wedge \dots \wedge c_n)\implies l$ must fulfil $c_1\wedge \dots \wedge c_n \wedge \neg l$.

% TODO: move as much of this notation further up to avoid duplications
Denote by $f_{\mathcal{L}\to\mathcal{L}'}\colon \mathcal{L}\to\mathcal{L}'$ the NN part mapping from layer $\mathcal{L}$ to $\mathcal{L}'$, and $p^{(\mathcal{L})}_{\mathcal{L'}\to c}=p_c\circ f_{\mathcal{L}\to\mathcal{C}}\circ f_{\mathcal{L}'\to\mathcal{L}}\colon \mathcal{L'}\to [0,1]$ the function evaluating the presence of concept $c$ in layer $\mathcal{L}$ for a latent vector $v$ from an earlier layer $\mathcal{L}'$.
Denote by $\mathcal{L}_c$ the layer which was chosen for the embedding of concept $c$, and let $\mathcal{L}_1$ be the earliest layer for which $\mathcal{L}_1=\mathcal{L}_c$ for some $c$. Note that $\mathcal{L}_l=\mathcal{L}_{L-1}$ is the final representation layer before the output confidence prediction, if this is $L$ layers later than $\mathcal{L}_1$.
Let $H_{c}=\{v\in \mathcal{L}_c\mid p_{\mathcal{L}_c\to c}(v)<\theta_c\} $ be the halfspace of the concept $c$ in the concept's $\mathcal{L}_c$.

Now we can reformulate the falsification as a search for a region in latent space:
\begin{lemma}
    A representation $v=f_{\to \mathcal{L}}(x)\in \mathcal{L}$ in layer $\mathcal{L}$ of a valid counterexample $x\in\mathcal{X}$ to the concept-based property 
    $(c_1\wedge \dots \wedge c_n)\implies l$
    must fulfil
    $v\in \bigwedge\limits_{c \in c_i, l} f^{-1}_{\mathcal{L}\to\mathcal{L}_c}(H_c)$.
\end{lemma}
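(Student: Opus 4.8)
The plan is to decompose the Boolean falsification requirement into one membership condition per literal and pull each condition back along the network to the common layer $\mathcal{L}$. First I would invoke the fact already recorded above, namely that any valid counterexample $x$ to $(c_1\wedge\dots\wedge c_n)\implies l$ must satisfy $c_1\wedge\dots\wedge c_n\wedge\neg l$; this is merely the negation of the attacked implication and is the only property-level input the proof requires. Everything else is a matter of rewriting these $n+1$ literal conditions as geometric conditions on $v=f_{\to\mathcal{L}}(x)$.

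The key structural observation is that every literal depends on $x$ only through a single concept output, and that each concept output factors through its chosen embedding layer. Concretely, for each $c$ one has $f_c = p_c\circ f_{\to\mathcal{L}_c}$, and for a layer $\mathcal{L}$ that sits no later than every $\mathcal{L}_c$ we may split $f_{\to\mathcal{L}_c}=f_{\mathcal{L}\to\mathcal{L}_c}\circ f_{\to\mathcal{L}}$. Hence the value $f_c(x)$ is determined by $v$ via $f_{\mathcal{L}\to\mathcal{L}_c}(v)$. I would then rewrite each literal as a halfspace membership at layer $\mathcal{L}_c$: since the probe $p_c$ is linear, the thresholded regions $\{p_{\mathcal{L}_c\to c}\geq\theta_c\}$ and their complements are affine halfspaces, so the requirement ``concept $c$ present'' (for the $c_i$) and ``task literal absent'' (for $l$) each reads as $f_{\mathcal{L}\to\mathcal{L}_c}(v)$ lying in the halfspace $H_c$ that satisfies that literal's polarity. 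Pulling each such membership back through $f_{\mathcal{L}\to\mathcal{L}_c}$ turns it into the condition $v\in f^{-1}_{\mathcal{L}\to\mathcal{L}_c}(H_c)$, and intersecting over all $c\in\{c_1,\dots,c_n,l\}$ yields exactly the claimed membership $v\in\bigcap_{c}f^{-1}_{\mathcal{L}\to\mathcal{L}_c}(H_c)$ (where the paper's $\bigwedge$ on sets is read as intersection).

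I expect the obstacles here to be bookkeeping rather than conceptual. The two delicate points are (i) the polarity convention: $H_c$ as written is the concept-\emph{absent} halfspace, which matches the task literal $l$ directly but requires its \emph{complement} for the positive literals $c_i$, so the statement must be read with $H_c$ denoting the halfspace satisfying each literal's required polarity; and (ii) the layer ordering, since $f^{-1}_{\mathcal{L}\to\mathcal{L}_c}$ is only well defined when $\mathcal{L}$ precedes every $\mathcal{L}_c$, which is why one takes $\mathcal{L}=\mathcal{L}_1$, the earliest embedding layer, as in the backward pass. A final caveat I would flag is that, although each $H_c$ is a halfspace, its preimage under the \emph{nonlinear} map $f_{\mathcal{L}\to\mathcal{L}_c}$ need not be convex; the lemma asserts only set membership, and it is precisely this preimage that the reachability over-approximation illustrated in \autoref{fig:concept-propagation} is meant to bound.
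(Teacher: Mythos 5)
Your proof is correct and follows exactly the route the paper intends (the paper in fact states this lemma without any proof): negate the implication to $c_1\wedge\dots\wedge c_n\wedge\neg l$, factor each literal's score through its embedding layer via $f_c = p_{\mathcal{L}_c\to c}\circ f_{\mathcal{L}\to\mathcal{L}_c}\circ f_{\to\mathcal{L}}$ for $\mathcal{L}$ no later than every $\mathcal{L}_c$, and pull each thresholded halfspace back to $\mathcal{L}$, reading the $\bigwedge$ as set intersection. Your polarity caveat is moreover a genuine correction to the paper's notation: $H_c$ as defined ($p_{\mathcal{L}_c\to c}(v)<\theta_c$) is the concept-\emph{absent} halfspace, which is right for the task literal $l$ but must be replaced by its complement (still a halfspace, so nothing downstream changes) for the positive literals $c_1,\dots,c_n$ --- and your remaining caveats (well-definedness of $f^{-1}_{\mathcal{L}\to\mathcal{L}_c}$ only when $\mathcal{L}$ precedes all $\mathcal{L}_c$, and non-convexity of preimages under the nonlinear $f_{\mathcal{L}\to\mathcal{L}_c}$) are likewise accurate and consistent with the paper's recursive reachability formulation.
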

\noindent
While it is costly to determine $f^{-1}_{\mathcal{L}\to\mathcal{L}_c}(H_c)$ independently,
the concept-based property gives rise to a recursive definition:
\begin{theorem}
Recursively define the propagation of halfspace intersections through the NN
\begin{align}
P_{L-1} &= \bigcap_{\mathcal{L}_L=\mathcal{L}_c} H_c
\;,&
P_{i} &= f^{-1}_{\mathcal{L}_{i-1}\to\mathcal{L}_i}(P_{i+1})\cap \bigcap_{\mathcal{L}_i=\mathcal{L}_c} H_c
\end{align}
Then for any counterexample $x$ to above concept-based property it must hold that $f_{\to\mathcal{L}}(x)\in P_{1}$.
$P_{1}$ can be efficiently calculated using \emph{a single backward propagation} through layers $L-1$ to $1$.
\end{theorem}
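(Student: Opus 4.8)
The plan is to reduce the theorem to the preceding Lemma together with two elementary facts about preimages, and then read off the efficiency claim from the layered structure of the recursion. Recall that a valid counterexample $x$ must falsify $(c_1\wedge\dots\wedge c_n)\implies l$, i.e. satisfy $c_1\wedge\dots\wedge c_n\wedge\neg l$, so by the preceding Lemma its representation in the earliest relevant layer obeys $f_{\to\mathcal{L}_1}(x)\in\bigcap_{c}f^{-1}_{\mathcal{L}_1\to\mathcal{L}_c}(H_c)$, where the intersection ranges over the relevant predicates $c\in\{c_1,\dots,c_n,l\}$ and each $H_c$ denotes the halfspace in its embedding layer $\mathcal{L}_c$ into which a counterexample's representation must fall. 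The goal is therefore to show that the recursively defined $P_1$ equals this pulled-back intersection, and that it is produced by one backward sweep.

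The two facts I would use are that preimages commute with composition, $(g\circ h)^{-1}=h^{-1}\circ g^{-1}$, and that they distribute over intersections, $h^{-1}(\bigcap_k S_k)=\bigcap_k h^{-1}(S_k)$. Since the network maps factor through the intermediate layers, $f_{\mathcal{L}_i\to\mathcal{L}_c}=f_{\mathcal{L}_{i+1}\to\mathcal{L}_c}\circ f_{\mathcal{L}_i\to\mathcal{L}_{i+1}}$ whenever $\mathcal{L}_c$ lies at or beyond layer $i+1$, these identities let me factor the large intersection layer by layer and turn the nested preimages into the single path preimage $f^{-1}_{\mathcal{L}_i\to\mathcal{L}_c}$.

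The core step is a downward induction on the layer index $i$ from $L-1$ to $1$, with induction hypothesis $P_i=\bigcap_{c\,:\,\mathcal{L}_c\in\{\mathcal{L}_i,\dots,\mathcal{L}_{L-1}\}}f^{-1}_{\mathcal{L}_i\to\mathcal{L}_c}(H_c)$, i.e. $P_i$ collects all constraints located at layer $i$ or later, each pulled back to $\mathcal{L}_i$. The base case $i=L-1$ is immediate, since only the final-layer halfspaces contribute. For the step I would pull $P_{i+1}$ back through the single block $f_{\mathcal{L}_i\to\mathcal{L}_{i+1}}$; by distributivity this commutes with the intersection, and by the composition identity each $f^{-1}_{\mathcal{L}_i\to\mathcal{L}_{i+1}}(f^{-1}_{\mathcal{L}_{i+1}\to\mathcal{L}_c}(H_c))$ collapses to $f^{-1}_{\mathcal{L}_i\to\mathcal{L}_c}(H_c)$, so that intersecting with $\bigcap_{\mathcal{L}_c=\mathcal{L}_i}H_c$ adds exactly the constraints newly embedded at layer $i$. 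Specializing to $i=1$ gives $P_1=\bigcap_c f^{-1}_{\mathcal{L}_1\to\mathcal{L}_c}(H_c)$, which by the Lemma is precisely the region a counterexample's representation must inhabit, hence $f_{\to\mathcal{L}_1}(x)\in P_1$. The single-backward-pass claim then follows because the recursion visits each layer once, performing one block-wise preimage and one intersection with a handful of halfspaces.

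The main obstacle I anticipate is the word \emph{efficiently}: the per-layer preimage $f^{-1}_{\mathcal{L}_i\to\mathcal{L}_{i+1}}$ is halfspace-preserving only for the linear parts (convolutions, fully-connected maps), whereas the ReLU activations turn a halfspace preimage into a union of polytopes that can proliferate across layers. The set-theoretic induction above is therefore exact, but the efficiency statement hinges on either restricting to the linear regime or, more realistically, replacing each exact preimage by a sound convex over-approximation as sketched in \autoref{fig:concept-propagation}. Since the theorem asserts only the necessary condition $f_{\to\mathcal{L}_1}(x)\in P_1$, such over-approximation keeps the result sound---every counterexample is still captured---and I would make this relaxation explicit rather than claim exact efficient computation through the nonlinearities.
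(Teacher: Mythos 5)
Your proof is correct and takes essentially the same route as the paper's own (one-sentence) proof: a downward induction on the recursive definition showing $P_1 = \bigcap_{c} f^{-1}_{\mathcal{L}_1\to\mathcal{L}_c}(H_c)$, which reduces the membership claim to the preceding lemma. Your explicit caveat that exact preimages through ReLU layers can yield non-convex unions of polytopes---so the efficiency claim needs either restriction to the linear blocks or a sound convex over-approximation---is a faithful and welcome unpacking of what the paper compresses into the phrase ``the constraint of considering ReLU networks.''
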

\begin{proof}
    The property inductively follow from the definition, noting that $P_{1} = \bigwedge\limits_{c \in c_i, l} f^{-1}_{\mathcal{L}\to\mathcal{L}_c}(H_c)$ and the constraint of considering ReLU networks.
\end{proof}
In particular, each propagation step only requires to obtain a polytope's preimage for a single NN layer operation, and apply a cheap intersection of the resulting polytope with halfspaces.
This makes the first part of the search very efficient, promising speedup compared to a full end-to-end search for counterexamples directly in the input space.

The forward-propagation case is similar. Here, it can additionally be shown, that the propagated $P_i$ always is a connected polytope, since intersection with half-spaces does not change this property, neither does the forward pass through continuous layer operations.

\end{document}